\newtheorem{theorem}{Theorem}[section]
\newtheorem{lemma}[theorem]{Lemma}
\newtheorem{proposition}[theorem]{Proposition}
\newtheorem{Corollary}[theorem]{Corollary}
\def\BibTeX{{\rm B\kern-.05em{\sc i\kern-.025em b}\kern-.08em
    T\kern-.1667em\lower.7ex\hbox{E}\kern-.125emX}}
\title{Exact solution and projection filters for open quantum systems subject to imperfect measurements}
\author{Ibrahim Ramadan, Nina H. Amini, and Paolo Mason$^{*}$% <-this % stops a space
%\thanks{*This work was not supported by any organization}% <-this % stops a space
\thanks{$^{*}$The authors are with Laboratoire des signaux et syst\`emes (L2S)
CNRS, CentraleSup\'elec, Universit\'e Paris-Sud, Universit\'e Paris-Saclay.
3, rue Joliot-Curie, 91192 Gif-sur-Yvette, cedex, France
        {\tt\small firstname.lastname@l2s.centralesupelec.fr}}
}
\begin{document}

\maketitle

\thispagestyle{empty}

\begin{abstract}
%In this paper, we consider an open quantum system undergoing imperfect and indirect measurement. The main goal is to develop an approximate quantum filter   by  projecting on a lower dimensional submanifold in view of reducing the computational cost of real-time simulations of the quantum filter. We generalize  the approach of \cite{Gao}, originally developed for perfect measurement, %developed  in \cite{Gao} for perfect measurement 
%in presence of detection imperfection. An error analysis is performed to evaluate the precision of this approximate quantum filter, focusing on the case of quantum non-demolition (QND) measurement. Furthermore, for QND measurement, we show that this projection approach allows to express the exact solution of the quantum filter equation in terms of  the solution of a lower dimensional stochastic differential equation. 
%Simulation suggests the efficiency of the proposed
%quantum projection filter, even in  presence of a stabilizing feedback control which depends on the projection filter.
In this paper, we consider an open quantum system undergoing imperfect and indirect measurement. For quantum non-demolition (QND) measurement, we show that the system evolves on an appropriately chosen manifold and we express the exact solution of the quantum filter equation in terms of  the solution of a lower dimensional stochastic differential equation. In order to further reduce the dimension of the system under study, we consider the projection on the lower dimensional manifold originally introduced in \cite{Gao} for the case of perfect measurements. An error analysis is performed to evaluate the precision of this approximate quantum filter, focusing on the case of QND measurement. Simulations suggest the efficiency of the proposed
quantum projection filter, even in  presence of a stabilizing feedback control which depends on the projection filter.
\end{abstract}
\begin{IEEEkeywords}
Stochastic differential equation; Quantum projection filter; Open quantum
systems; Quantum information geometry.
\end{IEEEkeywords}
\section{Introduction}
%Similar to the crucial role of classical control theory in technologies, quantum control theory is central in development of quantum technologies. 
Being able to reliably control quantum dynamics is a fundamental step towards the development of quantum technologies.
Quantum systems may be assumed to be in closed or open form. Unlike closed systems, open quantum systems are by definition in interaction with an environment, hence they provide a more realistic description of physical systems. On the other hand, the interaction with the environment entails decoherence phenomena, %i.e., 
characterized by a loss of information \cite{davies1976quantum}. For controlled open quantum systems, closed-loop control strategies are preferable, compared to open-loop ones, due to robustness issues. A measurement-based feedback strategy can be realized based on an estimation of the state which is  obtained by partial observations of the system. Such an estimation is called quantum filter or quantum trajectory in physics literature \cite{belavkin1989nondemolition,bouten2007introduction,barchielli2009quantum,wiseman2009quantum}. The controlled dynamics obtained in this way %problem 
fits in the framework of stochastic control, see e.g., \cite{belavkin2004towards} for further clarifications. %One of the obstacles to realize a feedback strategy in real experiments is the computation time to simulate the quantum filter. 

The works in e.g.,~\cite{Vanhandel3,liang2019exponential} present feedback stabilization of some particular open quantum systems by using geometric control, Lyapunov methods, and stochastic tools. Feedback stabilization methods are based on the real-time simulation of a quantum filter equation to obtain an estimate of the quantum state. The evolution of the quantum filter is usually described by a large number of equations and their simulation represents an obstacle to realize in real-time feedback strategies in real experiments. For instance, for a $m$-qubit system, the evolution of the density matrix is described by $2^{2m}-1$ stochastic differential equations. 
%As this quantum filter equation is usually high-dimensional, this, in general, makes it impossible (due to the short time scales of the quantum system) to achieve real-time simulation of the filter equation. 
As in the classical case, in order to tackle this issue the basic idea is to seek reduced dynamics containing enough information in order to design efficient feedback controls based on them. Such feedback strategies should possibly be robust with respect to experimental imperfections. %filter equation 
 For instance, in \cite{liang2022model}, the authors show the robustness of a stabilizing feedback depending on %an approximation of
 a reduced dynamics only involving the diagonal elements of the filter state  in the case of QND measurements.

%The strategy here, as in the classical case, is to seek a reduced filter equation. The main point is that such reduced filters should have enough accuracy to be applied in the design of the feedbacks, and such reduced filters should not perturb the efficiency and robustness of the feedback strategies with respect to experimental imperfections. For instance, in \cite{liang2022model}, the authors show the robustness of a stabilizing feedback depending on an approximation of diagonal elements of the filter state  in the case of QND measurement. 

The projection filtering strategy has been developed in the classical case in~\cite{Amari,Brigo1,Brigo2},  based on differential and information geometry tools. To our knowledge, the quantum projection filter scheme was first proposed in \cite{Vanhandel1}. Later,  in \cite{Nielsen} the authors obtained the evolution of system state in a lower dimensional manifold by unsupervised learning. This was  achieved by use of local tangent space alignment. In \cite{tezak2017low}, a dynamical law is derived by minimizing the statistical distance in the moving basis and an equivalence with the projection filter has been shown. Recently, in  \cite{Gao}, a quantum projection filtering approach was developed in which the dynamics is projected onto a manifold consisting of an exponential family of unnormalized density matrices. An extended Kalman filter and numerical approaches have been respectively established in  \cite{Emzir} and  \cite{Rouchon}. %In the aim of feedback stabilization, the main goal is to construct feedbacks based on reduced filters stabilizing any open quantum systems. This will be more realistic to be implemented in real experiments. 

In this paper, we consider an open quantum system undergoing indirect measurement in presence of detection imperfections. Firstly by suitably choosing a submanifold of the state space, we show that the exact solution $\rho_t$ of the quantum filter equation under QND measurement can be expressed in parametrized form as $\rho_{\phi_t},$ where $\phi_t$ corresponds to  the solution of a lower dimensional stochastic differential equation.  Note that similar results have been derived for the particular case of qubit systems, with a different approach, in~\cite{rouchon-sarlette}. Then, in order to further reduce the complexity of the dynamics, i.e., to reduce the dimension of the parameter $\phi,$ we follow the projection filter approach introduced in \cite{Gao}, originally developed for perfect measurements. Specifically we adapt the computation of the approximation error in the case of  imperfect measurements. We observe that under QND measurements, the asymptotic behavior of the approximate projection filter is compatible with the original filter, in the sense that both dynamics converge to the set of invariant subspaces. This motivates the  application of a projection filter in a stabilizing feedback control law. To this aim, we verify numerically the efficiency of the stabilizing feedback control introduced in \cite{Liang} in the case of a two-level quantum system evaluated at the approximate filter.  This is promising for further investigations.

%This is found by projecting the dynamics on a suitable manifold.  

%Then, by introducing a new exponential submanifold, we obtain the exact solution of the quantum filter under QND measurement. Notably, the latter is obtained in terms of the solution of  a lower dimensional stochastic differential equation, compared to the original filter. Furthermore, we adapt the computation of the approximation error in the case of imperfections by projecting on the exponential submanifold introduced in \cite{Gao}. Moreover, by the help of the stochastic Lasalle-theorem \cite{Mao}, we prove analytically the convergence of the projected filter to its equilibrium state. This proofs the strength and importance of our projection process, not just for dimension reduction, but also for the behavior analysis. 
%A quantum projection filtering approach may be of particular interest in the stabilizing feedback design, provided that the designed feedback law still stabilizes the system whenever the quantum filter is replaced by its approximation.
%The main perspective of this paper is the use of such approximated filters in the stabilizing feedback design. 
%Here, as a first step to study this problem, we verify numerically the efficiency of a stabilizing feedback control when evaluated at the proposed approximated filter state. Numerical simulation in the case of a two-level quantum system by using the feedback control proposed in \cite{Liang} appears promising for further investigations.  

This paper is organized as follows. Section \ref{P} introduces the quantum filter equation under consideration.  Section \ref{exact} is devoted to the study of its exact solution in the case of QND measurements. In Section \ref{Q}, we develop the projection filter approach in the case of detection imperfections and we characterize the residual errors obtained from the projection process. We also derive a bound on the average total residual norm by assuming QND measurement. Also, we obtain a quantum state reduction result for the projection filter in the case of QND measurements. In Section \ref{R}, we perform a numerical simulation  for the case of a two-level system and discuss the application of the projection filter in the feedback design suggested in \cite{Liang}. 
Section \ref{S} provides a summary and gives some future perspectives.

{\bf{Notation.}} %Let $i=\sqrt{-1}$ be the imaginary unit.
%The complex conjugate transpose of a matrix $A$ is represented by $A^{\dag}$.%
% The identity matrix is displayed by $\mathbb{I}$. The trace of the square matrix $A$ is denoted by ${\rm Tr}(A)$.%
The singular values of $m\times m$ matrix $A$ are denoted by $s_{1}(A) \geq s_{2}(A)\geq\dots\geq s_{m}(A)$.
The commutator of matrices $A$ and $B$ is denoted by $[A,B] = AB-BA$. %
A square matrix $A$ is said to be Hermitian if $A=A^\dag,$ where $A^{\dag}$ corresponds to the complex conjugate transpose of $A.$ The Frobenius norm of $A$ is defined by $\lVert  A  \rVert_{F}=\sqrt{{\rm Tr}(A^{\dag}A)}.$
%\section{Preliminaries}\label{P}
\section{System description}\label{P}
Let us consider a finite dimensional open quantum system undergoing indirect measurement in the case of homodyne detection. 
%Let us denote by $\mathcal{H}$ the Hilbert space of the considered open quantum system. Suppose that dim($\mathcal{H}$) $= n < \infty$. %An atomic ensemble coupled to a cavity is 
%continuously interacting with an external single-channel laser 
%field in the vacuum state. 
The evolution of such a system is described by the following matrix-valued stochastic differential equation
\begin{align}
&d {\rho}_{t}=-i\left[H, {\rho}_{t}\right] dt
+ \left(L {\rho}_{t} L^{\dagger}-\frac{1}{2}(L^{\dagger} L \rho_{t}+{\rho}_{t} L^{\dagger} L)\right) dt \nonumber\\ 
&+\sqrt{\eta}\left(L {\rho}_{t}+{\rho}_{t} L^{\dagger}-\operatorname{Tr}\left[\left(L+L^{\dagger}\right){\rho}_{t}\right]{\rho}_{t}\right) dW_{t}.
 \label{1}
\end{align}
The density operator $\rho$ belongs to the space $S$ of Hermitian, positive semidefinite operators of trace one acting on  $\mathbb C^n.$

In the above equation, $H=H^{\dagger}$ is the Hamiltonian, $L$ represents the coupling operator, $0<\eta \leq 1$ is the detector efficiency. The classical Wiener process $W_{t}$    is related to the observation process ${Y}_{t},$ which is a continuous semimartingale with quadratic variation $\langle Y, Y\rangle_{t}=t$ satisfying
 \begin{align}
d{Y}_{t}=d{W}_{t}+\sqrt{\eta} \operatorname{Tr}\left[\left(L+L^{\dagger}\right) {\rho}_{t}\right] dt. \label{eq:y}\end{align}
Note that, for more general observation processes, the diffusion term in the evolution equation may be driven by a complex Wiener process (for more details see, e.g., \cite{Doherty}).

In the following we will mainly work with the Zakai equation, which is the  unnormalized form of the quantum filter equation \eqref{1} and which is given by %on it compared with the nonlinear filter equation
%in \eqref{1}. Consider
\begin{align}
d \tilde{\rho}_{t}&=-i\left[H, \tilde{\rho}_{t}\right]d t+\left(L \tilde{\rho}_{t} L^{\dagger}-\frac{1}{2}\left(L^{\dagger} L \tilde{\rho}_{t}+\tilde{\rho}_{t} L^{\dagger} L\right)\right) d t\nonumber\\
&+\sqrt{\eta}\left(L \tilde{\rho}_{t}+\tilde{\rho}_{t} L^{\dagger}\right) d Y_{t}. \label{zakai}
\end{align}
In particular %$\tilde{\rho}_{t}$ is such that 
$\rho_{t} = \tilde{\rho}_{t}/\mathrm{Tr}(\tilde{\rho}_{t})$. %\footnote{\color{red} We already said that it is unnormalized. Hence I would write: In particular $\rho_{t} = \tilde{\rho}_{t}/\mathrm{Tr}(\tilde{\rho}_{t})$} %The state $\tilde{\rho}_{t}$ is initially set to be $\tilde{\rho}_{0}= \rho_{0}$. 
Letting $\mathcal{A}$ be the set of all Hermitian operators on $\mathbb C^n,$  the evolution corresponding to~\eqref{zakai} takes place on the space
\begin{equation}
\mathcal{Q} = \{\rho \in \mathcal{A} \mid \rho \geq 0\},
\end{equation}
which is the closed subset of $\mathcal{A}$ consisting of all nonnegative Hermitian operators on $\mathbb C^n.$  %that contains the nonnegative Hermitian operators.
In particular $\mathcal{Q}$ can be seen as a differential manifold of dimension $n^2$. We denote by $T_{\rho}\mathcal{Q}$, the tangent space of $\mathcal{Q}$ at the point $\rho$, which is identified with $\mathcal{A}$.

 Since the vector fields defining the dynamics are linear, hence globally Lipschitz, Equation~\eqref{zakai} has a unique solution~\cite{Protter}. Since $\rho_{t} = \tilde{\rho}_{t}/\mathrm{Tr}(\tilde{\rho}_{t})$, we deduce that~\eqref{1} has a unique solution as well. %\footnote{\color{red} strange to say this for \eqref{zakai} and not for \eqref{1}}

For compatibility reasons with the differential manifold structure (see e.g.~\cite{Brigo1}), we further consider the Stratonovich form of the above equation, which is given by 
% The following Stratonovich quantum stochastic differential equation is equivalent to the It$\hat{o}$ quantum stochastic differential equation in \eqref{2}:
\begin{equation}
d \tilde{\rho}_{t}\!=\!\left(-i\left[H, \tilde{\rho}_{t}\right]+F \left(\tilde{\rho}_{t}\right)\right) d t+\sqrt{\eta}\left(L \tilde{\rho}_{t}+\tilde{\rho}_{t} L^{\dagger}\right) \circ d Y_{t}, \label{stra}
\end{equation} 

%\begin{equation}
%d \tilde{\rho}_{t}\!=\!\left(-i\left[H, \tilde{\rho}_{t}\right]-F %\left(\tilde{\rho}_{t}\right)\right) d t+\sqrt{\eta}\left(L %\tilde{\rho}_{t}+\tilde{\rho}_{t} L^{\dagger}\right) \circ d Y_{t}, \label{stra}
%\end{equation} 

where 
$F\left(\tilde{\rho}_{t}\right)\!=\!(1-\eta)L \tilde{\rho}_{t} L^{\dagger}-\frac{\left(\eta L+L^{\dagger}\right) L \tilde{\rho}_{t}+\tilde{\rho}_{t} L^{\dagger}\left(L+\eta L^{\dagger}\right)}{2}.$

%$F\left(\tilde{\rho}_{t}\right)\!=\!(\eta-1)L \tilde{\rho}_{t} L^{\dagger}+\frac{\left(\eta L+L^{\dagger}\right) L \tilde{\rho}_{t}+\tilde{\rho}_{t} L^{\dagger}\left(L+\eta L^{\dagger}\right)}{2}.$
%}

\section{Exact Solution}
\label{exact}
In this section, under suitable assumptions, we construct a submanifold $\mathcal M$ of $\mathcal A$ such that the dynamics given by \eqref{stra}, with initial condition $\rho_0$, is confined to $\mathcal{M},$ and we express the dynamics in the corresponding coordinate system. %tangent to it. %propose a new exponential submanifold design to give an exact solution for the quantum filter equation (\ref{1}) using differential geometric methods. 
In the following, we assume that $L$ is Hermitian, that is $L=L^\dag,$ and that $[H,L]=0,$ which corresponds to quantum non-demolition measurements \cite{braginsky1995quantum}.  In this case, we can write $L=\sum_{k=1}^{K} \lambda_{k} P_{k}$ and  $H=\sum_{j=1}^{D} \beta_{j} Q_{j}$, where the Hermitian operators $P_{k},Q_{j}$ are orthogonal projectors, that is $P_{k}P_{l}=\delta_{kl}P_{l}$ and $Q_{k}Q_{l}=\delta_{kl}Q_{l},$ satisfying $[P_k,Q_j]=0$ for every $k,j,$ and $K$ and $D$ are positive integers. Without loss of generality, we assume $K<n$ and $D<n.$ This is justified by the fact that replacing $L$ and $H$ by $L-\lambda_K\mathbb{I}$ and $H-\beta_D\mathbb{I}$ respectively, does not affect the normalized dynamics given by \eqref{1}.
\iffalse
 We reconsider the unnormalized Stratonovich form of (\ref{1}), given by \eqref{stra}.
 \begin{equation}
d \tilde{\rho}_{t}=\left(-i\left[H, \tilde{\rho}_{t}\right]-G \left(\tilde{\rho}_{t}\right)\right) d t+\sqrt{\eta}\left(L \tilde{\rho}_{t}+\tilde{\rho}_{t} L\right) \circ d Y_{t} \label{a}
\end{equation}
where $G\left(\tilde{\rho}_{t}\right)=(\eta-1)(L \tilde{\rho}_{t} L)+\frac{(1+\eta)}{2}(L^{2}\tilde{\rho}_{t}+\tilde{\rho}_{t} L^{2}).$\\
\fi

Let $\alpha=(\alpha_{11},\alpha_{12},\dots,\alpha_{K-1,K},\alpha_{KK})\in\mathbb R^{\frac{K(K+1)}{2}},$ $\theta\in\mathbb R^K,$ $\gamma\in\mathbb R^D,$  and  $\phi=(\theta,\gamma,\alpha)\in\mathbb{R}^N$, with $N:=K+D+\frac{K(K+1)}{2}$. Now define $\tilde{\rho}_{\phi}:=e^{\frac{1}{2}L_{\theta}+\frac{i}{2}H_{\gamma}}\rho_{\alpha}e^{\frac{1}{2}L_{\theta}-\frac{i}{2}H_{\gamma}}$ with 
\[L_{\theta}=\sum_{k=1}^{K} \theta_{k} P_{k},\quad H_{\gamma}=\sum_{j=1}^{D} \gamma_{j} Q_{j},\] 
and 
\[\rho_{\alpha}=\rho_{0}+\sum_{1\leq k\leq j\leq K}\left(P_{k}\rho_{0}P_{j}+(1-\delta_{kj})P_{j}\rho_{0}P_{k}\right)\left(e^{\alpha_{kj}}-1\right).\] 
It can be easily verified that $\tilde{\rho}_{\phi}\in\mathcal A.$ For the sake of simplicity, we assume that the set $\{\frac{\partial{\tilde{\rho}_{\phi}}}{\partial \phi_{1}}, \dots, \frac{\partial{\tilde{\rho}_{\phi}}}{\partial \phi_{N}}\}$
 is linearly
independent. Then $\mathcal {M}:= \{{\tilde{\rho}_{\phi}}\mid\phi\in\mathbb R^N\}$ is locally a $N$-dimensional differential submanifold of $\mathcal{A}$, with  tangent space  given by
\begin{equation}
T_{\tilde{\rho}_{\phi}}{\mathcal{M}}=span\left\{\frac{\partial{\tilde{\rho}_{\phi}}}{\partial \phi_{1}},\dots,\frac{\partial{\tilde{\rho}_{\phi}}}{\partial \phi_{N}}\right\}.
\end{equation}
A direct calculation yields
\begin{equation}
\frac{\partial{\tilde{\rho}_{\phi}}}{\partial \theta_{k}}=\frac{1}{2}(P_{k}\tilde{\rho}_{\phi}+\tilde{\rho}_{\phi}P_{k}),\label{eq-1}
\end{equation}
\begin{equation}
\frac{\partial{\tilde{\rho}_{\phi}}}{\partial \gamma_{j}}=\frac{i}{2}(Q_{j}\tilde{\rho}_{\phi}-\tilde{\rho}_{\phi}Q_{j}),\label{eq-2}
\end{equation}
\begin{equation}
\frac{\partial{\tilde{\rho}_{\phi}}}{\partial \alpha_{kj}}\!\!=\!e^{\frac{1}{2}L_{\theta}+\frac{i}{2}H_{\gamma}}\!(P_{k}\rho_{0}P_{j}+(1\!-\!\delta_{kj})P_{j}\rho_{0}P_{k})e^{\alpha_{kj}}e^{\frac{1}{2}L_{\theta}-\frac{i}{2}H_{\gamma}}.\label{eq-3}
\end{equation}
We have the following lemma,  which follows by direct calculation and by using \eqref{eq-1}, \eqref{eq-2} and \eqref{eq-3}.
\begin{lemma}
The terms $i\left[H, \tilde{\rho}_{\phi}\right]$, $F \left(\tilde{\rho}_{\phi}\right)$ and $L \tilde{\rho}_{\phi}+\tilde{\rho}_{\phi} L$ appearing in \eqref{stra} belong to the tangent space
$T_{\tilde{\rho}_{\phi}}{\mathcal{M}}.$
Furthermore,
\begin{align*}
i\left[H, \tilde{\rho}_{\phi}\right]&=2\sum_{j} \beta_{j}\frac{\partial{\tilde{\rho}_{\phi}}}{\partial \gamma_{j}},\\
F \left(\tilde{\rho}_{\phi}\right)&=(1-\eta)\sum_{k,j} \lambda_{k}\lambda_{j}\frac{\partial{\tilde{\rho}_{\phi}}}{\partial \alpha_{kj}}-(1+\eta)\sum_{k} \lambda_{k}^{2}\frac{\partial{\tilde{\rho}_{\phi}}}{\partial \theta_{k}},\\
L \tilde{\rho}_{\phi}+&\tilde{\rho}_{\phi} L=2\sum_{k} \lambda_{k}\frac{\partial{\tilde{\rho}_{\phi}}}{\partial \theta_{k}}.
\end{align*}
\label{lemma-tangent}
\end{lemma}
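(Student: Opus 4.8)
The plan is to establish the three displayed equalities separately; the tangency assertion then follows for free, since each right-hand side is by construction a linear combination of the spanning vectors $\partial\tilde\rho_\phi/\partial\phi_i$ of $T_{\tilde\rho_\phi}\mathcal{M}$. The identities for $L\tilde\rho_\phi+\tilde\rho_\phi L$ and for $i[H,\tilde\rho_\phi]$ are immediate. Inserting $L=\sum_k\lambda_k P_k$ and using \eqref{eq-1} gives $L\tilde\rho_\phi+\tilde\rho_\phi L=\sum_k\lambda_k(P_k\tilde\rho_\phi+\tilde\rho_\phi P_k)=2\sum_k\lambda_k\,\partial\tilde\rho_\phi/\partial\theta_k$, while inserting $H=\sum_j\beta_j Q_j$ and comparing with \eqref{eq-2} gives $i[H,\tilde\rho_\phi]=i\sum_j\beta_j(Q_j\tilde\rho_\phi-\tilde\rho_\phi Q_j)=2\sum_j\beta_j\,\partial\tilde\rho_\phi/\partial\gamma_j$.

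The real work is the formula for $F(\tilde\rho_\phi)$. First I would use $L=L^\dagger$ to collapse the definition of $F$ into $F(\tilde\rho_\phi)=(1-\eta)L\tilde\rho_\phi L-\tfrac{1+\eta}{2}(L^2\tilde\rho_\phi+\tilde\rho_\phi L^2)$. The second term is treated exactly like the easy identities: since $P_kP_l=\delta_{kl}P_l$ yields $L^2=\sum_k\lambda_k^2P_k$, equation \eqref{eq-1} turns $L^2\tilde\rho_\phi+\tilde\rho_\phi L^2$ into $2\sum_k\lambda_k^2\,\partial\tilde\rho_\phi/\partial\theta_k$, which supplies the $-(1+\eta)\sum_k\lambda_k^2\,\partial\tilde\rho_\phi/\partial\theta_k$ contribution.

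For the term $(1-\eta)L\tilde\rho_\phi L$ the \emph{crucial} point is that $L$ commutes with the conjugating exponentials: the spectral projectors $P_k$ mutually commute, so $[L,L_\theta]=0$, and the QND hypothesis $[P_k,Q_j]=0$ gives $[L,H_\gamma]=0$, whence $L$ commutes with $e^{\frac{1}{2}L_\theta\pm\frac{i}{2}H_\gamma}$. This lets me move $L$ inside the conjugation, $L\tilde\rho_\phi L=e^{\frac{1}{2}L_\theta+\frac{i}{2}H_\gamma}(L\rho_\alpha L)e^{\frac{1}{2}L_\theta-\frac{i}{2}H_\gamma}$, and then use $LP_k=\lambda_kP_k$ and $P_jL=\lambda_jP_j$ to evaluate $L\rho_\alpha L$. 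Comparing the result with \eqref{eq-3}, the matching with $\sum_{k,j}\lambda_k\lambda_j\,\partial\tilde\rho_\phi/\partial\alpha_{kj}$ reduces to the purely algebraic identity $L\rho_0L=\sum_{1\le k\le j\le K}\lambda_k\lambda_j(P_k\rho_0P_j+(1-\delta_{kj})P_j\rho_0P_k)$, obtained by expanding $L\rho_0L=\sum_{k,j}\lambda_k\lambda_jP_k\rho_0P_j$ over ordered pairs and regrouping it into the symmetrized sum over $k\le j$.

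I expect the main obstacle to be precisely this last step: one must carefully reconcile the unrestricted double sum defining $L\rho_0L$ with the symmetrized sum over $k\le j$ built into the $\alpha$-parametrization of $\rho_\alpha$, while bookkeeping the $(e^{\alpha_{kj}}-1)$ factors so that the $\alpha$-independent part recombines exactly into $L\rho_0L$ and the remainder reproduces $\sum_{k,j}\lambda_k\lambda_j\,\partial\tilde\rho_\phi/\partial\alpha_{kj}$ (the sum being understood over the indices $1\le k\le j\le K$ for which the parameters are defined). Everything else is a direct substitution using the spectral decompositions and the commutation relations guaranteed by the QND assumptions.
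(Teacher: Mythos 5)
Your proposal is correct and follows the same route as the paper, which simply asserts the lemma ``follows by direct calculation'' from \eqref{eq-1}--\eqref{eq-3}: substitution of the spectral decompositions, commutation of $L$ with $e^{\frac12 L_\theta\pm\frac i2 H_\gamma}$ (guaranteed by $[P_k,Q_j]=0$), and regrouping of $L\rho_0L=\sum_{k,j}\lambda_k\lambda_jP_k\rho_0P_j$ into the symmetrized sum over $k\le j$. Your explicit treatment of the $(e^{\alpha_{kj}}-1)$ bookkeeping and the reading of $\sum_{k,j}$ as $\sum_{1\le k\le j\le K}$ correctly fills in the details the paper leaves implicit.
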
		
%\begin{proof}
% {\textcolor{blue}{A direct calculation gives}}
%\begin{align*}
%i\left[H, \tilde{\rho}_{\phi}\right]&=i\sum_{j} \beta_{j}(Q_{j}\tilde{\rho}_{\phi}-\tilde{\rho}_{\phi}Q_{j})=2\sum_{j} \beta_{j}\frac{\partial{\tilde{\rho}_{\phi}}}{\partial \gamma_{j}},\\
%F \left(\tilde{\rho}_{\phi}\right)&=(1-\eta)(L \tilde{\rho}_{\phi} L)-\frac{(1+\eta)}{2}(L^{2}\tilde{\rho}_{\phi}+\tilde{\rho}_{\phi} L^{2})\nonumber\\&=(1-\eta)\sum_{k,j} \lambda_{k}\lambda_{j}\frac{\partial{\tilde{\rho}_{\phi}}}{\partial \alpha_{kj}}-(1+\eta)\sum_{k} \lambda_{k}^{2}\frac{\partial{\tilde{\rho}_{\phi}}}{\partial \theta_{k}},\\
%L \tilde{\rho}_{\phi}+&\tilde{\rho}_{\phi} L=\sum_{k} \lambda_{k}(P_{k}\tilde{\rho}_{\phi}+\tilde{\rho}_{\phi}P_{k})=2\sum_{k} \lambda_{k}\frac{\partial{\tilde{\rho}_{\phi}}}{\partial \theta_{k}}.
%\end{align*}
%\small
%The proof is complete.
%\end{proof}

%\vspace{0.3mm}
Now, we can establish the main result of this section.
\begin{theorem}
The solution $\tilde{\rho}_t$ of the quantum filter equation (\ref{1}) with initial condition $\rho_0$ coincides with $\tilde{\rho}_{\phi(t)}/{\rm Tr}(\tilde{\rho}_{\phi(t)}),$ where   $\phi(t)=(\theta(t),\gamma(t),\alpha(t)),$ with $\theta(t)$ satisfying  the stochastic differential equation
$$
d\theta_{k}(t)=-(1+\eta)\lambda_{k}^{2}dt+2\sqrt{\eta}\lambda_{k}dY_{t},\quad \theta_{k}(0)=0,
$$
and $\gamma_{j}(t)=-2\beta_{j}t,$ 
$\alpha_{kj}(t)=(1-\eta)\lambda_{k}\lambda_{j}t.$ 
\label{main}
\end{theorem}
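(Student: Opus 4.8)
The plan is to argue by \emph{guess and verify}, exploiting the uniqueness of solutions recorded after \eqref{zakai}. Rather than integrating \eqref{stra} directly, I would take the candidate process $\phi(t)=(\theta(t),\gamma(t),\alpha(t))$ defined by the stated dynamics, form the curve $t\mapsto\tilde{\rho}_{\phi(t)}$ on $\mathcal M$, and show that it satisfies \eqref{stra} with initial datum $\rho_0$. Since \eqref{stra} (equivalently \eqref{zakai}) has a unique solution, this identifies $\tilde{\rho}_{\phi(t)}$ with $\tilde{\rho}_t$, and dividing by the trace yields the claimed expression $\rho_t=\tilde{\rho}_{\phi(t)}/{\rm Tr}(\tilde{\rho}_{\phi(t)})$.

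The central tool is the Stratonovich chain rule, which is precisely why the Stratonovich form \eqref{stra} was introduced: because it obeys the ordinary rules of calculus, I can write
\begin{equation*}
d\tilde{\rho}_{\phi(t)}=\sum_{k}\frac{\partial\tilde{\rho}_\phi}{\partial\theta_k}\circ d\theta_k+\sum_{j}\frac{\partial\tilde{\rho}_\phi}{\partial\gamma_j}\circ d\gamma_j+\sum_{k\le j}\frac{\partial\tilde{\rho}_\phi}{\partial\alpha_{kj}}\circ d\alpha_{kj}.
\end{equation*}
Substituting $d\theta_k=-(1+\eta)\lambda_k^2\,dt+2\sqrt{\eta}\lambda_k\,dY_t$, $d\gamma_j=-2\beta_j\,dt$ and $d\alpha_{kj}=(1-\eta)\lambda_k\lambda_j\,dt$, and noting that the diffusion coefficient of $\theta_k$ is constant so that its It\^o and Stratonovich forms coincide (the correction term vanishes), I would collect the result into a $dt$ part and a $\circ dY_t$ part. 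The $\circ dY_t$ coefficient is $2\sqrt{\eta}\sum_k\lambda_k\,\partial\tilde{\rho}_\phi/\partial\theta_k$, and the $dt$ coefficient is $-2\sum_j\beta_j\,\partial\tilde{\rho}_\phi/\partial\gamma_j-(1+\eta)\sum_k\lambda_k^2\,\partial\tilde{\rho}_\phi/\partial\theta_k+(1-\eta)\sum_{k\le j}\lambda_k\lambda_j\,\partial\tilde{\rho}_\phi/\partial\alpha_{kj}$.

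At this point I would invoke Lemma~\ref{lemma-tangent}: its three identities show exactly that the $\circ dY_t$ coefficient equals $\sqrt{\eta}(L\tilde{\rho}_\phi+\tilde{\rho}_\phi L)$ and that the $dt$ coefficient equals $-i[H,\tilde{\rho}_\phi]+F(\tilde{\rho}_\phi)$, so $\tilde{\rho}_{\phi(t)}$ solves \eqref{stra}. It remains to check the initial condition: at $t=0$ one has $\theta(0)=\gamma(0)=\alpha(0)=0$, whence $e^{\frac12 L_\theta\pm\frac i2 H_\gamma}=\mathbb I$ and $\rho_\alpha=\rho_0$, giving $\tilde{\rho}_{\phi(0)}=\rho_0$. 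The step requiring the most care is the bookkeeping in the $\alpha$-directions: one must read the summation in Lemma~\ref{lemma-tangent} over the index set $\{(k,j):k\le j\}$ consistent with the parametrization, so that the chain-rule term and the $L\tilde{\rho}_\phi L$ contribution to $F$ match without a spurious factor of two. A secondary, purely technical point is to justify the chain rule on $\mathcal M$, which is licensed by the smoothness of $\phi\mapsto\tilde{\rho}_\phi$ and the assumed linear independence of the partial derivatives that makes $\mathcal M$ a genuine submanifold.
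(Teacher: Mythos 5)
Your proposal is correct, and its computational core --- the Stratonovich chain rule along $\mathcal M$ combined with the three identities of Lemma~\ref{lemma-tangent} --- is exactly the paper's; what differs is the logical packaging. The paper argues forward: it first observes, via Lemma~\ref{lemma-tangent}, that the vector fields of \eqref{stra} are tangent to $\mathcal M$, so solutions evolve on $\mathcal M$ and satisfy \eqref{c}; it then matches the coefficients of \eqref{c} against the chain-rule expansion in the moving frame $\left\{\frac{\partial\tilde{\rho}_{\phi}}{\partial\phi_1},\dots,\frac{\partial\tilde{\rho}_{\phi}}{\partial\phi_N}\right\}$ and solves the resulting scalar equations for $\theta$, $\gamma$, $\alpha$. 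That identification step is precisely where the assumed linear independence of the partial derivatives carries the weight, and the preliminary claim that solutions of \eqref{stra} remain on $\mathcal M$ is itself an invariance assertion the paper passes over quickly. Your guess-and-verify route inverts the argument: you exhibit an explicit $\mathcal M$-valued process solving \eqref{stra} with initial datum $\rho_0$ and appeal to pathwise uniqueness for \eqref{zakai} (established in the paper from the globally Lipschitz vector fields, and transferred to \eqref{stra} by the standard It\^o--Stratonovich correspondence; your remark that the $\theta_k$-equation needs no correction because the diffusion coefficient $2\sqrt{\eta}\lambda_k$ is constant is the right observation, and it is also what lets you state the $\theta_k$-dynamics in It\^o form as in the theorem). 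This buys you two things the paper's proof does not make explicit: the linear-independence hypothesis is never invoked, and the invariance of $\mathcal M$ emerges as a conclusion rather than being needed as an input. Your caution about the $\alpha$-bookkeeping is also well placed: since $\frac{\partial\tilde{\rho}_{\phi}}{\partial\alpha_{kj}}$ is built from the symmetrized combination $P_k\rho_0P_j+(1-\delta_{kj})P_j\rho_0P_k$, the sums written $\sum_{k,j}$ in Lemma~\ref{lemma-tangent} and in \eqref{c} must be read over $1\le k\le j\le K$, consistently with the parametrization $\alpha\in\mathbb R^{K(K+1)/2}$; only then does the $(1-\eta)L\tilde{\rho}_{\phi}L$ term match the $\alpha$-directions without double-counting the off-diagonal pairs, yielding $\alpha_{kj}(t)=(1-\eta)\lambda_k\lambda_j t$ as claimed. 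Finally, your closing normalization step silently repairs a typo in the statement (the normalized object should be $\rho_t$, not $\tilde{\rho}_t$), which is the intended reading.
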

\begin{proof}
By the previous lemma, the solutions of~\eqref{stra} evolve (almost surely) on $\mathcal{M}$ and satisfy %can be rewritten as 
\begin{align}
&d \tilde{\rho}_{\phi}=-2\sum_{j} \beta_{j}\frac{\partial{\tilde{\rho}_{\phi}}}{\partial \gamma_{j}}dt+(1-\eta)\sum_{k,j} \lambda_{k}\lambda_{j}\frac{\partial{\tilde{\rho}_{\phi}}}{\partial \alpha_{kj}}dt\nonumber\\&\!\!\!-(1+\eta)\sum_{k} \lambda_{k}^{2}\frac{\partial{\tilde{\rho}_{\phi}}}{\partial \theta_{k}}dt+2\sqrt{\eta}\sum_{k} \lambda_{k}\frac{\partial{\tilde{\rho}_{\phi}}}{\partial \theta_{k}}\circ dY_{t}. \label{c}
\end{align}
On other hand, by the chain rule we have
\begin{equation}
d \tilde{\rho}_{\phi}=\sum_{k}\frac{\partial{\tilde{\rho}_{\phi}}}{\partial \theta_{k}}\circ d\theta_{k}+\sum_{j}\frac{\partial{\tilde{\rho}_{\phi}}}{\partial \gamma_{j}}\circ d\gamma_{j}+\sum_{k,j}\frac{\partial{\tilde{\rho}_{\phi}}}{\partial \alpha_{kj}}\circ d\alpha_{kj}.
\end{equation}
To conclude, it is sufficient to identify the coefficients of the above equations with respect to the tangent space basis and solve the ordinary differential equations obtained for $\gamma_j$ and $\alpha_{kj}.$
\end{proof}
% {\textcolor{blue}{{\bf{Remark:}} When $\eta=1$, we get a reduced $N$-dimensional submanifold with $N = K + D$.}}
\section{Quantum projection filter and error analysis}\label{Q}
\subsection{ The projection filter approach}
The computation of the exact solution presented in Section~\ref{exact} is valid under the assumption of quantum non-demolition measurements. In this section, we follow an approach called projection filter, see, e.g.,~\cite{Amari,Vanhandel1}, which does not require the latter assumption and allows to further reduce the dimension of the system under study. This approach is mainly based on choosing an appropriate submanifold and suitably projecting the dynamics on it. %making a projection over the tangent space of that manifold.} %Here we consider an exponential manifold which is consider in \cite{Gao}  }}

To formalize this approach, let us introduce some quantum information geometry tools, mainly borrowed from \cite{Gao,Amari}. 

Recall that  $T_{\rho}\mathcal{Q}$, the tangent space of $\mathcal{Q}$ at the point $\rho,$ may be identified with $\mathcal{A}.$
When a tangent vector $X\in T_{\rho}\mathcal{Q}$ is considered as an element of $\mathcal{A}$ by this identification, we denote it by $X^{(m)}$ and we call it the $m$-representation of $X$.

%An $m$-representation of a tangent vector $X\in T_{\rho}\mathcal{Q}$, denoted by $X^{(m)}$, is the coordinate representation of $X$ with respect to the basis vectors
%\begin{equation}
%\partial_{i}^{(m)}=\partial_{i}:=\frac{\partial}{\partial \epsilon_i},
%\end{equation}
%where $(\epsilon_1,\dots,\epsilon_{n^2})$ are given coordinates on $\mathcal{Q}$. {\bf OK???}

%and given by a tangent vector $X$ $\in$ $T_{\rho}\mathcal{Q}$. We denote by $\rho_{\epsilon}$ the representation of a state $\rho$ in a given coordinate system {$\epsilon^{i}$}  on $\mathcal{Q}$. %Each state is  parameterized by $\rho_{\epsilon}$, if a coordinate system {$\epsilon^{i}$} is given on $\mathcal{Q}$. Also,
%We have
%\begin{equation}
%\partial_{i}^{(m)}=\partial_{i},
%\end{equation}
%is the $m$-representation of the natural basis vectors in the tangent space, where 
%$\partial_{i} := \frac{\partial{\rho_{\epsilon}}}{\partial{\epsilon_{i}}}$ and \{$\partial_{i}$\} are linearly independent and $T_{\rho}\mathcal{Q}=$Span\{${\partial_{i}}$\}.

%As a differential manifold does not have by nature an inner product structure, 
We define a symmetrized inner product $\langle\langle, \rangle\rangle_{\rho}$ on $T_{\rho}\mathcal{Q}\equiv\mathcal{A}$ as follows:
\begin{equation}
\label{eq:inner}
\langle\langle A, B\rangle\rangle_{\rho}=%\frac{1}{2}
\frac12{\rm Tr}(\rho AB + \rho BA),\quad \forall A, B \in \mathcal{A}.
\end{equation}
Next, the $e$-representation of a tangent vector $X \in {T_{\rho}\mathcal{Q}}$ is defined as
the Hermitian operator $X^{(e)} \in \mathcal{A}$ satisfying
\begin{equation}
\label{eq:innerem}
\langle\langle X^{(e)}, A\rangle\rangle_{\rho}={\rm Tr}(X^{(m)}A),\quad \forall A \in \mathcal{A}.
\end{equation}
By using~\eqref{eq:inner} and~\eqref{eq:innerem} it is easy to obtain
\begin{equation}
\label{eq:emrepr}
X^{(m)} = \frac{1}{2}(\rho X^{(e)} +X^{(e)} \rho),\quad \forall X\in T_{\rho}\mathcal{Q}.
\end{equation}
Using the $e$-representation defined above, a further inner
product on $T_{\rho}\mathcal{Q}$ is defined by
\[
\langle X,Y \rangle_{\rho}\!=\!\langle\langle X^{(e)},Y^{(e)}\rangle\rangle_{\rho}\!=\!{\rm Tr}(X^{(m)}Y^{(e)}), \quad \forall X, Y \in T_{\rho}\mathcal{Q}.    
\]
The quantum Fisher metric is a Riemannian metric $g$ whose components are
\begin{equation}
\label{eq:Riem}
g_{ij} =\langle \partial_{i},\partial_{j} \rangle_{\rho}={\rm Tr}(\partial_{i}^{(m)}\partial_{j}^{(e)}),
\end{equation}
where $\partial_i:=\frac{\partial}{\partial \epsilon_i}$ and $(\epsilon_1,\dots,\epsilon_{n^2})$ are given coordinates on $\mathcal{Q}$.
%More details about this information geometric tools can be found in \cite{1} and \cite{11}.

Following~\cite{Brigo2,Gao} (in the classical and quantum framework, respectively), we consider the subset %a $m$-dimensional differential submanifold 
$\mathscr{S}$ of $\mathcal{Q}$ consisting of
an exponential family of unnormalized quantum density operators
\[\mathscr{S}= \{{\tilde{\rho}_{\theta}}\mid \theta = (\theta_{1}, \dots, \theta_{m}) \in \Theta \}.\]
Here $\tilde{\rho}_{\theta}:=e^{\frac{1}{2}\sum_{i=1}^{m} \theta_{i} A_{i}}\rho_{0}e^{\frac{1}{2}\sum_{i=1}^{m} \theta_{i} A_{i}}$, $\rho_0$ is the initial condition for the (projected) dynamics, the  operators $A_{i} \in \mathcal{A}$, for $i \in \{1, 2,\dots, m\}$, are
 assumed to be mutually commuting and pre-designed, and $\Theta$ is an open subset of $\mathbb{R}^{m}$
containing the origin. 
%We assume that the entire submanifold $\mathcal{M}$ can be covered by a single coordinate chart $(\mathcal{M}, \theta = (\theta_{1}, . . . , \theta_{m}) \in \Theta)$, where $\Theta$ is an open subset of $\mathbb{R}^{m}$ containing the origin, see \cite{Gao}.\\
%Due to the chain rule in Stratonovich stochastic calculus, we have \begin{equation}
%d \tilde{\rho}_{\theta}=\sum_{i}\frac{\partial{\tilde{\rho}_{\theta}}}{\partial \theta_{i}}\circ d\theta_{i}. \label{5}
%\end{equation}
 Assuming that the set \{$\frac{\partial{\tilde{\rho}_{\theta}}}{\partial \theta_{1}}, \dots, \frac{\partial{\tilde{\rho}_{\theta}}}{\partial \theta_{m}}$\}
 is linearly
independent, we obtain that $\mathscr{S}$ is, locally, a $m$-dimensional differential submanifold of $\mathcal{Q}$.
The tangent  space at some $\tilde{\rho}_{\theta}\in \mathscr{S}$ is given by 
$
T_{\tilde{\rho}_{\theta}}\mathscr{S} = {\rm span}\{{\tilde{\partial_{i}}, i = 1,\dots, m}\},$ \vspace{2mm}
where $\tilde{\partial_{i}}^{(m)}=\frac{\partial{\tilde{\rho}_{\theta}}}{\partial \theta_{i}}=\frac{1}{2}(A_{i}\tilde{\rho}_{\theta}+\tilde{\rho}_{\theta}A_{i}).$
Using~\eqref{eq:emrepr} we get 
$
\tilde{\partial_{i}}^{(e)} = A_{i}.
$
In analogy with~\eqref{eq:Riem} we define a Riemannian metric on $\mathscr{S}$ whose components are  real-valued functions of $\theta$:
\begin{align}
g_{ij}(\theta)={\rm Tr}(\tilde{\partial_{i}}^{(m)}\tilde{\partial_{j}}^{(e)})={\rm Tr}(\tilde{\rho}_{\theta}A_{i}A_{j}).
\end{align}
The matrix $G(\theta):=(g_{ij}(\theta))_{i,j=1,\dots,m}$ is a quantum Fisher information matrix. Then, for every $\theta  \in \Theta$, we can define an orthogonal projection
operation $\Pi_{\theta}$ by
\begin{align}
\Pi_{\theta}:\mathcal{A} &\longrightarrow T_{\tilde{\rho}_{\theta}}\mathscr{S} \nonumber\\
x &\longmapsto \sum_{i=1}^{m}\sum_{j=1}^{m}g^{ij}(\theta)\langle\langle x,\tilde{\partial_{j}}^{(e)} \rangle\rangle_{\tilde{\rho}_{\theta}}\tilde{\partial_{i}}  \label{6},
\end{align}
where the $g^{ij}(\theta)$ are the components of the inverse of the quantum information
matrix $G(\theta)$.\\
%Consider a curve $\varphi : t \longmapsto \rho_{t}$ in $\mathscr{S}$. Through the (local) coordinate chart ($\mathscr{S}, \theta$), this curve corresponds to a curve $\gamma : t \longmapsto \theta_{t}$ in $\Theta$.
%Assuming that the curve $\varphi$ starts from the initial condition $\tilde{\rho}_{\theta_{0}} = \tilde{\rho}_{0}$, or equivalently, the curve $\gamma$ starts from $\theta_{0} = 0$, 
We define the quantum projection filter on $\mathscr{S}$ by
\begin{align}
d \tilde{\rho}_{\theta_{t}}=\Pi_{\theta_{t}}\left(-i\left[H, \tilde{\rho}_{\theta_{t}}\right]\right)d t+\Pi_{\theta_{t}}\left(-F \left(\tilde{\rho}_{\theta_{t}}\right)\right) d t\nonumber\\+\Pi_{\theta_{t}}\left(\sqrt{\eta}(L \tilde{\rho}_{\theta_{t}}+\tilde{\rho}_{\theta_{t}} L^{\dagger})\right) \circ d Y_{t}. \label{7}
\end{align}
%This projection can be summarized in two steps. The first step is to evaluate the SDE \eqref{3} at the state $\tilde{\rho}_{\theta_{t}}$, instead of the true state $\rho_{t}$. The second step is projecting the vector fields obtained at the state $\tilde{\rho}_{\theta_{t}}$ into the tangent vector space $T_{\tilde{\rho}_{\theta}}\mathscr{S}$.
Since the vector fields regulating the dynamics are everywhere tangent to $\mathscr{S}$,  the solution of the previous equation is a well-defined stochastic process $\tilde{\rho}_{\theta_{t}}$ on $\mathscr{S},$ whenever $\tilde{\rho}_{\theta_0} = {\rho}_0$ belongs to $\mathscr{S}.$
Similarly to \cite{Gao}, by using the orthogonal projection operation and the chain rule
\begin{equation}
d \tilde{\rho}_{\theta}=\sum_{i}\frac{\partial{\tilde{\rho}_{\theta}}}{\partial \theta_{i}}\circ d\theta_{i}, \label{5}
\end{equation}
we can easily express the dynamics of the parameter $\theta_t$ as %the quantum projection filter:
\begin{equation}
d \theta_{t}=G\left(\theta_{t}\right)^{-1}\left\{\Xi\left(\theta_{t}\right) d t+\Gamma\left(\theta_{t}\right) \circ d Y_t\right\} \label{8}
\end{equation}
with $\theta_{i}(0)=0,$ for $i=1, \ldots, m$. Here, the $j$-th elements of the $m$-dimensional column vectors $\Xi\left(\theta_{t}\right)$ and $\Gamma\left(\theta_{t}\right)$ are 
\begin{equation*}
\Xi_{j}\left(\theta_{t}\right)=\operatorname{Tr}\left(\tilde{\rho}_{\theta_{t}}\left(i\left[H, A_{j}\right]-F^{\dagger}\left(A_{j}\right)\right)\right)
\end{equation*}
and 
%\begin{equation*}
$\Gamma_{j}\left(\theta_{t}\right)=\sqrt{\eta}\operatorname{Tr}\left(\tilde{\rho}_{\theta_{t}}\left(A_{j} L+L^{\dagger} A_{j}\right)\right).$
%\end{equation*}
Let $\rho_{\theta_{t}}=\frac{\tilde{\rho}_{\theta_{t}}}{{\rm Tr}(\tilde{\rho}_{\theta_{t}})}$ be the normalized approximate quantum information state.
We note that only $m$ SDEs need to be solved for $\rho_{\theta_{t}}$ instead of $n^{2}-1$ for the original quantum filter.

\subsection{Error analysis}
Following \cite{Brigo2}, we define at each point $\tilde{\rho}_{\theta_{t}}$ the prediction residual as
%%{
%%\small
%\begin{equation*}
$\Omega(t)= -i\left[H, \tilde{\rho}_{\theta_{t}}\right]-\Pi_{\theta_{t}}\left(-i\left[H, \tilde{\rho}_{\theta_{t}}\right]\right)$
%\end{equation*}
%%}
and the two correction residuals as
%{
%\small
\begin{align*}
C_{1}(t)&=-F(\tilde{\rho}_{\theta_{t}})-\Pi_{\theta_{t}}\left(-F(\tilde{\rho}_{\theta_{t}})\right),\\
%\end{equation*} 
%}
%\vspace{-4mm}
%{
%%\small
%\begin{equation*}
C_{2}(t)&=\sqrt{\eta}\left(L \tilde{\rho}_{\theta_{t}}+\tilde{\rho}_{\theta_{t}} L^{\dagger}\right)-\Pi_{\theta_{t}}\left(\sqrt{\eta}(L \tilde{\rho}_{\theta_{t}}+\tilde{\rho}_{\theta_{t}} L^{\dagger})\right),
\end{align*}
%}
respectively. These residuals refer to the local approximation errors due to the projection of the vector fields $ -i\left[H, \tilde{\rho}_{\theta_{t}}\right]$, $-F(\tilde{\rho}_{\theta_{t}})$ and $\sqrt{\eta}\left(L \tilde{\rho}_{\theta_{t}}+\tilde{\rho}_{\theta_{t}} L^{\dagger}\right)$ into the tangent space $T_{\tilde{\rho}_{\theta}}\mathscr{S}$ at time $t$.
For the sake of simplicity, we assume that the operator $L$ is Hermitian. 
This assumption simplifies the analysis of the local errors. %This is verified in many physical setups. %This is also advantageous in many experimental settings; such as trapping a cold atomic ensemble in an optical
%cavity \cite{Thomsen,Vanhandel2}. 
By using the spectral theorem, we can write $
L=\sum_{i=1}^{n_{0}} \lambda_{i} P_{{i}},
$ where $n_{0} \leq n$ is the number of nonzero distinct eigenvalues
of $L$ denoted by {$\lambda_{i},$}  and
$P_{{i}}$ are orthogonal projections.  %{\textcolor{blue}{satisfying $
%Now \eqref{3} can be written as 
%\begin{equation}
%d \tilde{\rho}_{t}=\left(-i\left[H, \tilde{\rho}_{t}\right]-G \left(\tilde{\rho}_{t}\right)\right) d t+\sqrt{\eta}\left(L \tilde{\rho}_{t}+\tilde{\rho}_{t} L\right) \circ d Y_{t}
%\end{equation}
%where $G\left(\tilde{\rho}_{t}\right)=(\eta-1)(L \tilde{\rho}_{t} L)+\frac{(1+\eta)}{2}(L^{2}\tilde{\rho}_{t}+\tilde{\rho}_{t} L^{2}) .$\\

 Let us set $M(\tilde{\rho}_{\theta_{t}}):=A_{k} \tilde{\rho}_{\theta_{t}}+\tilde{\rho}_{\theta_{t}}A_{k}$ and $X_{0}:=-i[H,\rho_{0}].$  We have the following result. %Now we state the following proposition.
\begin{proposition}\label{B}
Assume $m=n_{0}$ and  
$A_{i}=P_{{i}}.$ Then, the correction residuals are $$C_{1}(t)=\sum_{k=1}^{m}(\eta -1) \lambda_{k}^{2}\Big(\frac{1}{2}(A_{k} \tilde{\rho}_{\theta_{t}}+ \tilde{\rho}_{\theta_{t}} A_{k}) - A_{k} \tilde{\rho}_{\theta_{t}} A_{k}\Big)$$ and $C_{2}(t)=0$ $\forall t \geq 0.$ 

Moreover, if $[H,L]=0$, %\footnote{This assumption is usually referred to as QND measurement.}, 
then the exponential quantum projection filter equation \eqref{8} becomes
\begin{equation}
d \theta_i(t)=-2 \eta \lambda_i^2 dt+2 \sqrt{\eta} \lambda_i d Y_t,\quad i=1,\dots,m \label{14}
\end{equation}
%{\textcolor{blue}{where $\alpha=(\lambda_{1}^{2},\dots,\lambda_{m}^{2})^{\prime}$ and $\beta=(\lambda_{1},\ldots,\lambda_{m})^{\prime}$,}}\\
%and the correction residuals $C_{1}(t)$ and $C_{2}(t)$ remain as in Theorem \ref{B} for all $t\geq 0$.\\
and the prediction residual $\Omega(t)$ is given by
\begin{equation}
\Omega(t)=e^{\frac{1}{2}\sum_{i=1}^{m} \theta_{i}(t) A_{i}}X_{0}e^{\frac{1}{2}\sum_{i=1}^{m} \theta_{i}(t) A_{i}}, t \geq 0. \label{15}
\end{equation}
%In this case, the exponential quantum projection filter \eqref{8} is simplified to
%\begin{equation}
%d \theta_{t}=G(\theta_{t})^{-1}{\rm Tr}(i\tilde{\rho}_{\theta_{t}}[H, A_{j}])dt -2 \eta \alpha dt+2 \sqrt{\eta} \beta  d Y_t \label{11}
%\end{equation}
%where $\alpha=(\lambda_{1}^{2},\ldots,\lambda_{m}^{2})$ and $\beta=(\lambda_{1},\ldots,\lambda_{m})'$.
\end{proposition}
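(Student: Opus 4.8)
The plan is to compute each of the three residuals directly using the explicit structure afforded by the choice $A_i = P_i$ and the spectral decomposition $L=\sum_i \lambda_i P_i$. First I would establish a key simplification: since the $A_i=P_i$ are the spectral projectors of $L$, the exponential family elements $\tilde\rho_\theta = e^{\frac12\sum_i\theta_i P_i}\rho_0 e^{\frac12\sum_i\theta_i P_i}$ have $e$-representations $\tilde\partial_i^{(e)}=P_i$ that commute with $L$. The projection $\Pi_\theta$ acts via the inner products $\langle\langle x, P_j\rangle\rangle_{\tilde\rho_\theta}=\frac12\mathrm{Tr}(\tilde\rho_\theta(xP_j+P_jx))$, so the central computational task is to evaluate these traces for $x$ equal to each of the three vector fields and check whether the vector field already lies in $T_{\tilde\rho_\theta}\mathscr S=\mathrm{span}\{P_k\tilde\rho_\theta+\tilde\rho_\theta P_k\}$.

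For $C_2(t)$, I would show the diffusion vector field $\sqrt\eta(L\tilde\rho_\theta+\tilde\rho_\theta L^\dagger)$ is already tangent. Writing $L=\sum_k\lambda_k P_k$ and using $L^\dagger=L$, we get $L\tilde\rho_\theta+\tilde\rho_\theta L=\sum_k\lambda_k(P_k\tilde\rho_\theta+\tilde\rho_\theta P_k)=2\sum_k\lambda_k\tilde\partial_k^{(m)}$, which is manifestly a linear combination of the basis vectors of the tangent space; hence its residual after projection vanishes, giving $C_2(t)=0$. For $C_1(t)$, I would expand $F(\tilde\rho_\theta)=(1-\eta)L\tilde\rho_\theta L-\frac12(L^2\tilde\rho_\theta+\tilde\rho_\theta L^2)$ (specializing $F$ to Hermitian $L$) and separate it into a part in $\mathrm{span}\{P_k\tilde\rho_\theta+\tilde\rho_\theta P_k\}$ and a genuinely transverse part. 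The term $\frac12(L^2\tilde\rho_\theta+\tilde\rho_\theta L^2)=\frac12\sum_k\lambda_k^2(P_k\tilde\rho_\theta+\tilde\rho_\theta P_k)$ is tangent, while $L\tilde\rho_\theta L=\sum_k\lambda_k^2 P_k\tilde\rho_\theta P_k$ is the source of the residual; projecting $-F$ and subtracting should leave exactly the stated expression $\sum_k(\eta-1)\lambda_k^2(\frac12(P_k\tilde\rho_\theta+\tilde\rho_\theta P_k)-P_k\tilde\rho_\theta P_k)$, which I would verify is itself orthogonal to the tangent space (the heart of why it survives as a residual).

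For the reduced SDE \eqref{14}, I would invoke the explicit formulas \eqref{8} for $\Xi_j,\Gamma_j$ together with the assumption $[H,L]=0$. Commutation forces $[H,P_j]=0$, so $i[H,A_j]=0$ and the Hamiltonian contribution to $\Xi_j$ drops out; the remaining computation of $G(\theta)^{-1}\Xi$ and $G(\theta)^{-1}\Gamma$ should collapse to the stated diagonal form $d\theta_i=-2\eta\lambda_i^2\,dt+2\sqrt\eta\lambda_i\,dY_t$, using $F^\dagger(A_j)$ evaluated on projectors and the fact that $G(\theta)$ becomes effectively diagonal in this eigenbasis. Finally, for the prediction residual \eqref{15}, I would use that $[H,L]=0$ implies $[H,P_i]=0$, so $H$ commutes with $e^{\frac12\sum_i\theta_i P_i}$; writing $\tilde\rho_\theta=e^{\frac12\sum\theta_i P_i}\rho_0 e^{\frac12\sum\theta_i P_i}$ and pushing $H$ through the exponentials gives $-i[H,\tilde\rho_\theta]=e^{\frac12\sum\theta_i P_i}(-i[H,\rho_0])e^{\frac12\sum\theta_i P_i}=e^{\frac12\sum\theta_i P_i}X_0 e^{\frac12\sum\theta_i P_i}$; I would then argue this quantity is orthogonal to the tangent space so that $\Pi_\theta(-i[H,\tilde\rho_\theta])=0$ and $\Omega(t)$ equals the prediction vector field itself. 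The main obstacle I anticipate is the orthogonality verifications — confirming that $P_k\tilde\rho_\theta P_k-\frac12(P_k\tilde\rho_\theta+\tilde\rho_\theta P_k)$ and the conjugated commutator $e^{\cdots}X_0 e^{\cdots}$ are genuinely annihilated by $\Pi_\theta$. This requires carefully computing $\langle\langle\,\cdot\,,P_j\rangle\rangle_{\tilde\rho_\theta}$ on these transverse terms and exploiting the off-diagonal block structure (the $P_k\cdot P_j$ decomposition of $\rho_0$) to show the relevant traces cancel.
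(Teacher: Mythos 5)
Your route is the paper's own: you show $C_2=0$ by noting the diffusion field is a linear combination of the $\tilde{\partial}_k$, you split $F$ (for Hermitian $L$) into a tangent part $\frac{1+\eta}{2}(L^2\tilde{\rho}_\theta+\tilde{\rho}_\theta L^2)$ plus the term $(1-\eta)L\tilde{\rho}_\theta L$, you obtain \eqref{14} from the explicit $\Xi_j$, $\Gamma_j$ and the diagonal Fisher matrix $g_{ij}(\theta)=\delta_{ij}\operatorname{Tr}(\tilde{\rho}_\theta A_j)$ using $[H,L]=0\Rightarrow[H,P_j]=0$, and you get \eqref{15} by pushing $H$ through $e^{\frac12\sum_i\theta_iA_i}$ — exactly the paper's proof, and your explicit orthogonality checks ($\operatorname{Tr}\big(\big(\tfrac12(A_k\tilde{\rho}_\theta+\tilde{\rho}_\theta A_k)-A_k\tilde{\rho}_\theta A_k\big)A_j\big)=0$ and $\operatorname{Tr}(\Omega(t)A_j)=0$, both immediate from $A_kA_j=\delta_{kj}A_j$ and $[H,A_j]=0$) usefully make explicit what the paper's one-line proof leaves implicit. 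One caution on your computational setup: the pairing in \eqref{6} must be read as $\langle\langle x^{(e)},\tilde{\partial}_j^{(e)}\rangle\rangle_{\tilde{\rho}_\theta}=\operatorname{Tr}(xA_j)$, so the projection coefficient of a vector field $x$ is $\operatorname{Tr}(xA_j)$; the literal expression you propose to evaluate, $\frac12\operatorname{Tr}(\tilde{\rho}_\theta(xA_j+A_jx))$, is quadratic in $\tilde{\rho}_\theta$, fails to reproduce $\Gamma_j$ and $\Xi_j$, and under that reading $\Pi_\theta$ would not even fix the tangent space, so your $C_2=0$ argument would break.

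The genuine gap is your identity $L\tilde{\rho}_\theta L=\sum_k\lambda_k^2P_k\tilde{\rho}_\theta P_k$: in fact $L\tilde{\rho}_\theta L=\sum_{k,l}\lambda_k\lambda_lP_k\tilde{\rho}_\theta P_l$, and the cross blocks $P_k\tilde{\rho}_\theta P_l=e^{(\theta_k+\theta_l)/2}P_k\rho_0P_l$ ($k\neq l$) do not vanish unless $[L,\rho_0]=0$ (in the paper's own simulation $\rho_0=\frac12(I-\sigma_x)$, they are nonzero). Carrying them through, the correction residual is $\pm(1-\eta)\big(L\tilde{\rho}_\theta L-\sum_k\lambda_k^2\tilde{\partial}_k^{(m)}\big)$, whose $k\neq l$ blocks carry coefficients $-\frac{(\lambda_k-\lambda_l)^2}{2}$ rather than the $-\frac{\lambda_k^2+\lambda_l^2}{2}$ implied by the stated formula; the difference is precisely the dropped cross terms $(1-\eta)\sum_{k\neq l}\lambda_k\lambda_lP_k\tilde{\rho}_\theta P_l$, and for the spin-$\frac12$ example ($\lambda_1=-\lambda_2$) the two residuals differ by a factor $2$. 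So your derivation reaches the stated $C_1$ only by discarding these terms — but note the paper's own proof tacitly makes the same move, so your proposal reproduces the paper's argument, gap included; the residual remains of the form $(1-\eta)e^{\Delta(t)}\tilde{Y}e^{\Delta(t)}$ with a constant matrix $\tilde{Y}$ in place of $-\sum_k\lambda_k^2Y_k$, so the structure of the error bound in Theorem \ref{C} survives with modified constants. (The residual sign ambiguity you may notice traces to the paper's internal inconsistency between the drift $+F$ in \eqref{stra} and the $-F$ projected in \eqref{7}; it is immaterial downstream, where only $\|C_1\|_F$ enters, and the $+F$ convention is the one consistent with \eqref{14} and with Theorem \ref{main}.)
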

\begin{proof}
By %using the %
definitions of $C_{1}(t)$ and $C_{2}(t)$, one has
\begin{align*}
C_{1}(t)&=\Pi_{\theta_{t}}\left(F(\tilde{\rho}_{\theta_{t}})\right)-F(\tilde{\rho}_{\theta_{t}}) \nonumber\\
%&=\sum_{k,l=1}^{m} (\eta -1) \lambda_{k}\lambda_{l} \bigg(\Pi_{\theta_{t}}\left(A_{k} \tilde{\rho}_{\theta_{t}}A_{l}\right)-\left(A_{k} \tilde{\rho}_{\theta_{t}}A_{l}\right)\bigg) \nonumber\\%
%&+\sum_{k=1}^{m} \frac{\eta +1}{2} \lambda_{k}^{2}\bigg( \Pi_{\theta_{t}}\left(M(\tilde{\rho}_{\theta_{t}})\right)-M(\tilde{\rho}_{\theta_{t}})\bigg) \nonumber\\%
%&=\sum_{k=1}^{m} (\eta -1) \lambda_{k}^{2}\bigg( \frac{1}{2}(A_{k} \tilde{\rho}_{\theta_{t}}+\tilde{\rho}_{\theta_{t}}A_{k})-A_{k} \tilde{\rho}_{\theta_{t}}A_{k}\bigg).%
&=\sum_{k=1}^{m}(\eta -1) \lambda_{k}^{2}\bigg(\frac{1}{2}\left(M(\tilde{\rho}_{\theta_{t}})\right) - A_{k} \tilde{\rho}_{\theta_{t}} A_{k}\bigg),\\
%\end{align}
%
%\vspace{-4mm}
%
%%\small
%\begin{align}
C_{2}(t)&=\Pi_{\theta_{t}}\left(\sqrt{\eta}(L \tilde{\rho}_{\theta_{t}}+\tilde{\rho}_{\theta_{t}} L)\right)-\left(\sqrt{\eta}(L \tilde{\rho}_{\theta_{t}}+\tilde{\rho}_{\theta_{t}} L)\right) \nonumber\\
&=\textstyle{\sum_{k=1}^{m} 2 \sqrt{\eta} \lambda_{k}\big(\Pi_{\theta_{t}}(\tilde{\partial}_{k})-\tilde{\partial}_{k}\big)} =0.
\end{align*}
Equations~\eqref{14}-\eqref{15} follow from the fact that $[H,L]=0$ implies $\Xi_{j}(\theta_{t})=-2 \eta \lambda_{j}^{2}{\rm Tr}(\tilde{\rho}_{\theta_{t}}A_{j})$, $
\Gamma_{j}\left(\theta_{t}\right)=2\sqrt{\eta}\lambda_{j}{\rm Tr}(\tilde{\rho}_{\theta_{t}}A_{j})
$ and $g^{ij}(\theta) = \delta_{ij}{\rm Tr}(\tilde{\rho}_{\theta_{t}}A_{j}).$
%The rest of the proof is a direct consequence of the following identities 
%$$
%\Xi_{j}\left(\theta_{t}\right)={\rm Tr}(i\tilde{\rho}_{\theta_{t}}[H, A_{j}])-2 \eta \lambda_{j}^{2}{\rm Tr}(\tilde{\rho}_{\theta_{t}}A_{j}),
%$$  and $
%\Gamma_{j}\left(\theta_{t}\right)=2\sqrt{\eta}\lambda_{j}{\rm Tr}(\tilde{\rho}_{\theta_{t}}A_{j}).
%$
%To conclude, it is sufficient to apply the fact that $[H,L]=0.$
\end{proof}

%\begin{proof}
%Since $[H, L] = 0$ and $A_{i} = P_{{i}}$
%is the projection operator
%of $L$, one has $[H, A_{i}] = 0, i = 1, 2, . . . , m.$ Then we get the It$\hat{o}$ stochastic differential equation 
%\eqref{14}.
%\end{proof}
\vspace{0.8mm}
Let $\mathbb P$ denote the original probability measure under which $W_t$ is a Wiener process. By Girsanov theorem, there exists an equivalent probability measure $\mathbb P^{\prime}$ such that $Y_t$ in \eqref{eq:y} becomes a Wiener process. Let $\mathbb{E}$ denote the expectation with respect to the measure $\mathbb{P^{\prime}}.$ 

To measure the gap between %difference of the variations of 
the filter state and its approximation, we consider the average total residual norm defined as
\begin{equation}
e_{t}:=\mathbb{E}\left\|C_{1}(t)+C_{2}(t)+\Omega(t)\right\|_{F},
\end{equation}
with $e_{0}=0$. Also, set $Y_{k}:=\frac{1}{2}(A_{k}\rho_{0}+\rho_{0}A_{k})-A_{k}\rho_{0}A_{k},$
and $\sigma:=(1-\eta)\max_{k} \lambda_{k}^{2}.$
We now state the main result of this section.
\begin{theorem} \label{C}
Let the assumptions of Proposition \ref{B} hold true. If $[H,L]=0,$ then 
{\small
\begin{align}
e_{t} &\leq \sigma\sqrt{\sum_{k=1}^{m}{\rm Tr}(Y_{k}^{2})+\sum_{j \neq j^{\prime}}^{m}\left(s_{1}(Y_{j})\sum_{i=1}^{m}s_{i}(Y_{j^{\prime}})\right)}
+\sqrt{{\rm Tr}(X_{0}^{2})}. \label{16}
\end{align}
}
\end{theorem}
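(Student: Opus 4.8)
The plan is to start from Proposition~\ref{B}, which under these hypotheses gives $C_2(t)\equiv 0$ together with the closed forms of $C_1(t)$ and $\Omega(t)$, so that $e_t=\mathbb E\|C_1(t)+\Omega(t)\|_F$, and then to split by the triangle inequality $e_t\le \mathbb E\|C_1(t)\|_F+\mathbb E\|\Omega(t)\|_F$, matching the $\sigma$-term to $C_1$ and the $\sqrt{\mathrm{Tr}(X_0^2)}$-term to $\Omega$. The first reduction I would make is to factor out the common conjugation by $U_t:=e^{\frac12\sum_i\theta_i(t)A_i}$. Since $A_i=P_i$ and $[H,L]=0$ forces $[H,P_i]=0$, the operator $U_t$ commutes with every $P_i$ and with $H$; a short computation then shows $\Omega(t)=U_tX_0U_t$ and $C_1(t)=(\eta-1)\sum_k\lambda_k^2\,U_tY_kU_t$, where $Y_k=\tfrac12(A_k\rho_0+\rho_0A_k)-A_k\rho_0A_k$ is exactly the $t=0$ value of the bracket appearing in Proposition~\ref{B}. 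This is the step that converts the whole time dependence into the scalar weights $u_i:=e^{\theta_i(t)}$.

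Next I would solve \eqref{14} explicitly. Under $\mathbb P'$ the process $Y_t$ is a Brownian motion, so $\theta_i(t)=2\sqrt\eta\,\lambda_iY_t-2\eta\lambda_i^2t$ and $u_i$ is the stochastic exponential of $2\sqrt\eta\,\lambda_iY$, a martingale with $\mathbb E[u_i]=1$; more generally the Gaussian moment generating function gives $\mathbb E[u_iu_j]=e^{4\eta\lambda_i\lambda_jt}$. Writing $X_0=\sum_{i,j}P_iX_0P_j$ with the blocks $P_iX_0P_j$ mutually orthogonal for the Frobenius inner product, the conjugation acts diagonally, $\|U_tX_0U_t\|_F^2=\sum_{i,j}u_iu_j\|P_iX_0P_j\|_F^2$, and likewise for $C_1$. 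Applying $\mathbb E\|\cdot\|_F\le\sqrt{\mathbb E\|\cdot\|_F^2}$ (Jensen) reduces everything to weighted block sums in expectation, using $\sum_{i,j}\|P_iX_0P_j\|_F^2=\mathrm{Tr}(X_0^2)$ and $\sum_{i,j}\|P_iY_kP_j\|_F^2=\mathrm{Tr}(Y_k^2)$, together with the useful algebraic fact $P_kY_kP_k=0$.

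For the $C_1$ piece I would use $(1-\eta)\lambda_k^2\le\sigma$ to extract $\sigma$, separate the diagonal contribution $k=l$ (which produces $\sum_k\mathrm{Tr}(Y_k^2)$) from the cross contribution $k\ne l$, and estimate each cross trace $\mathrm{Tr}(D_tY_jD_tY_{j'})$, with $D_t=U_t^2$, by von Neumann's trace inequality $|\mathrm{Tr}(AB)|\le\sum_i s_i(A)s_i(B)\le s_1(A)\sum_i s_i(B)$; this is precisely what yields the mixed singular-value term $s_1(Y_j)\sum_i s_i(Y_{j'})$. Combining the $C_1$ and $\Omega$ estimates additively through the triangle inequality then produces the two summands of \eqref{16}.

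The main obstacle is the uniformity in $t$: the weights satisfy $\mathbb E[u_iu_j]=e^{4\eta\lambda_i\lambda_jt}$, which is bounded by $1$ only when $\lambda_i\lambda_j\le 0$, so a naive second-moment estimate is not automatically time-independent and the growing block combinations must be shown not to contribute. The real content of the argument is therefore to exploit the precise drift--diffusion balance of \eqref{14} --- the fact that the $-2\eta\lambda_i^2\,dt$ drift exactly cancels the It\^o correction of $2\sqrt\eta\,\lambda_i\,dY_t$, making each $u_i$ a genuine martingale --- together with the block structure of $X_0$ and of the $Y_k$, in order to keep the expected norms controlled by the stated time-independent quantities. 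I expect this control of the exponential martingale factors, rather than the residual algebra or the singular-value estimate, to be the crux of the proof.
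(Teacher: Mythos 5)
Your skeleton coincides with the paper's proof up to the decisive step: the observation that $C_2\equiv 0$ so $e_t=\mathbb{E}\|C_1(t)+\Omega(t)\|_F$, the triangle-inequality split, the conjugation identities $\Omega(t)=e^{\Delta(t)}X_0e^{\Delta(t)}$ and $C_1(t)=(\eta-1)\sum_k\lambda_k^2\,e^{\Delta(t)}Y_ke^{\Delta(t)}$ with $\Delta(t)=\frac{1}{2}\sum_i\theta_i(t)A_i$, the explicit solution of \eqref{14}, the martingale property $\mathbb{E}\,e^{\theta_i(t)}=1$, and singular-value trace estimates for the cross terms (the paper's Lemma~A.1) are exactly the paper's ingredients. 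But your central analytic step has a genuine gap, which you half-acknowledge without resolving. Passing through Jensen, $\mathbb{E}\|\cdot\|_F\le\sqrt{\mathbb{E}\|\cdot\|_F^2}$, converts the problem into second moments $\mathbb{E}[u_iu_j]=e^{4\eta\lambda_i\lambda_jt}$, and the ``growing block combinations'' that you hope ``must be shown not to contribute'' do contribute: for $\Omega$ the diagonal blocks $P_iX_0P_i=-i[H,P_i\rho_0P_i]$ are nonzero in general (only $P_kY_kP_k=0$ holds; there is no analogous vanishing for $X_0$), and they carry weight $\mathbb{E}[u_i^2]=e^{4\eta\lambda_i^2t}\to\infty$; likewise the surviving off-diagonal blocks $P_kY_kP_j=\frac12 P_k\rho_0P_j$ carry weights $e^{4\eta\lambda_k\lambda_jt}$, which blow up whenever two eigenvalues share a sign. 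So no refinement of the block algebra can make the second-moment route yield the time-independent bound \eqref{16}; the route itself is what fails.

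The paper avoids this by never squaring the expectation: all singular-value manipulations are performed pathwise under a single square root, the stochastic factor is extracted as the scalar $s_1^2(e^{2\Delta(t)})$ multiplying a deterministic constant, and only then is the expectation taken, so that only \emph{first} moments of $e^{\theta_i(t)}$ appear --- and first moments, unlike second moments, are exactly what the drift--diffusion balance you correctly identified controls, via $\mathbb{E}\,e^{\theta_i(t)}=e^{\theta_i(0)}=1$. Concretely, from $\|e^{\Delta}Xe^{\Delta}\|_F\le s_1(e^{2\Delta})\|X\|_F$ one gets $\mathbb{E}\|\Omega(t)\|_F\le\sqrt{\mathrm{Tr}(X_0^2)}\;\mathbb{E}\,[s_1(e^{2\Delta(t)})]$, and the analogous pathwise factorization for $C_1$ produces the $\sigma$-term of \eqref{16}, after which the martingale property closes the argument (the paper records this step as $\max_i\mathbb{E}\,e^{\theta_i(t)}=1$; strictly one is left with $\mathbb{E}[\max_i e^{\theta_i(t)}]$, a point the paper glosses over, but that subtlety is orthogonal to your gap). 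In short: your diagnosis that the exponential-martingale structure of \eqref{14} is the crux was right, but the missing idea is that it must be exploited in $L^1$, by pulling $s_1(e^{2\Delta(t)})$ out of the Frobenius norm pathwise, rather than through a Jensen/Cauchy--Schwarz passage to $L^2$, where the same structure gives no control.
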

\begin{proof}
Let us firstly note that  $e_{t}=\mathbb{E}\left\|C_{1}(t)+\Omega(t)\right\|_{F}$.	By using the triangular inequality, we get
$$e_t \leq \mathbb{E}\left\|C_{1}(t)\right\|_{F}+\mathbb{E}\left\|\Omega(t)\right\|_{F}.$$
Now, let $\Delta(t):=\frac{1}{2}\sum_{i=1}^{m} \theta_{i}(t) A_{i}.$ We have
\begin{align}
&C_{1}(t)=\sum_{k=1}^{m}(\eta -1) \lambda_{k}^{2}\bigg(\frac{1}{2}\left(M(\tilde{\rho}_{\theta_{t}})\right) - A_{k} \tilde{\rho}_{\theta_{t}} A_{k}\bigg)\nonumber\\
%&=\sum_{k=1}^{m}(\eta -1) \lambda_{k}^{2}\bigg(e^{\Delta(t)}\left(\frac{1}{2}\left(M(\rho_{0})\right ) -  A_{k}\rho_{0} A_{k} \right)e^{\Delta(t)}\bigg)\nonumber\\%
&=\sum_{k=1}^{m}(\eta -1) \lambda_{k}^{2}\bigg(e^{\Delta(t)}Y_{k}e^{\Delta(t)}\bigg).
\end{align}
Define $Z_{k}:=e^{2\Delta(t)}Y_{k}$  and $\tilde Z_{k}:=\sum_{i=1}^m s_i(Y_k)$. By using Lemma A.1 we get 
{\small
\begin{align}
&\mathbb{E}\left\|C_{1}(t)\right\|_{F}=\mathbb{E}\sqrt{{\rm Tr}(C_{1}(t)^{2}})\nonumber\\
%5&\!\!\!\leq\mathbb{E}\sqrt{\sigma^2 Tr\left(\left(\sum_{k=1}^{m}e^{\Delta(t)}Y_{k}e^{\Delta(t)} \right)^{2}\right)}\nonumber\\%
&\!\!\!= \sigma \mathbb{E}\sqrt{\sum_{k=1}^{m}Tr\left(Z_{k}^{2}\right)+\sum_{j \neq j^{\prime}}^{m}Tr\left( Z_{j}Z_{j^{\prime}}\right)} \nonumber\\
%&\!\!\!\leq \sigma \mathbb{E}\sqrt{\sum_{k=1}^{m}\left(\sum_{i=1}^{m}s_{i}\left(Z_{k}^{2}\right)\right)+\sum_{j \neq j^{\prime}}^{m}\left(\sum_{i=1}^{m}s_{i}\left(Z_{j}Z_{j^{\prime}}\right)\right)} \nonumber\\%
&\!\!\!\leq \sigma \mathbb{E}\sqrt{\sum_{k=1}^{m}\left(\sum_{i=1}^{m}s_{i}^{2}(Z_{k})\right)+\sum_{j \neq j^{\prime}}^{m}\left(\sum_{i=1}^{m}s_{i}(e^{2\Delta(t)})s_{i} (Y_{j}Z_{j^{\prime}})\right)} \nonumber\\
&\!\!\!\leq \sigma \mathbb{E}\sqrt{\sum_{k=1}^{m}\left(\sum_{i=1}^{m}s_{i}(e^{4\Delta(t)}Y_{k}^{2})\right)+\sum_{j \neq j^{\prime}}^{m}\left(s_{1}^{2}(e^{2\Delta(t)})s_{1}( Y_{j})\tilde Z_{j^{\prime}}\right)} \nonumber\\
&\!\!\!\leq \sigma \mathbb{E}\sqrt{s_{1}^{2}(e^{2\Delta(t)}) \left( \sum_{k=1}^{m}\Bigg(\sum_{i=1}^{m}s_{i}^{2}(Y_{k})\Bigg)+\sum_{j \neq j^{\prime}}^{m}\Bigg(s_{1}( Y_{j})\tilde Z_{j^{\prime}})\Bigg)\right)}  \nonumber\\
&\!\!\!\leq \sigma\sqrt{ \sum_{k=1}^{m}{\rm Tr}(Y_{k}^{2})+\sum_{j \neq j^{\prime}}^{m}\left(s_{1}( Y_{j})\tilde Z_{j^{\prime}}\right)}\max_{i}\mathbb{E}e^{\theta_{i}(t)}, \label{17}
%&\!\!\!=\sigma\sqrt{ \sum_{k=1}^{m}{\rm Tr}(Y_{k}^{2})+\sum_{j \neq j^{\prime}}^{m}\left(s_{1}( Y_{j})\tilde Z_{j^{\prime}}\right)}%
\end{align}
}

\noindent where $\max_{i}\mathbb{E}e^{\theta_{i}(t)}=\max_{i}e^{\theta_{i}(0)}=1.$ This comes from the fact that $e^{\theta_{i}(t)}$ is a martingale  with respect to $\mathbb P'.$

Similarly, we get
\begin{align}
&\mathbb{E}\left\|\Omega(t)\right\|_{F}=\mathbb{E}\sqrt{{\rm Tr}(\Omega(t)^{2}}) \leq \sqrt{{\rm Tr}(X_{0}^{2}}). \label{18}
\end{align}
Adding up \eqref{17} and \eqref{18}, we obtain the inequality \eqref{16}.
\end{proof}
Under some additional conditions, Theorem \ref{C} leads to an equivalence between the exponential quantum projection
filter equation \eqref{7} and the quantum filter equation \eqref{zakai}. 
\begin{Corollary}
Let the assumptions of Proposition \ref{B} hold true and assume in addition that $[H,L]=[H,\rho_0]=[L,\rho_0]=0.$ Then $\tilde{\rho}_{t} \equiv \tilde{\rho}_{\theta_{t}}.$
\label{errorexact}
\end{Corollary}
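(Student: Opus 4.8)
The plan is to prove the equivalence by showing that, under the additional commutation hypotheses, \emph{all three} residuals $\Omega(t)$, $C_1(t)$ and $C_2(t)$ vanish identically on the manifold $\mathscr{S}$. Once this is granted, the orthogonal projection $\Pi_{\theta_t}$ acts as the identity on each of the three vector fields entering the projection filter \eqref{7}, so that the projected equation \eqref{7} collapses to the Stratonovich equation \eqref{stra} governing the true (unnormalized) filter. The identity $\tilde{\rho}_t\equiv\tilde{\rho}_{\theta_t}$ then follows from the uniqueness of solutions of \eqref{stra} (equivalently \eqref{zakai}), together with the fact that both processes start at $\rho_0\in\mathscr{S}$.

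First I would record the vanishing of the residuals. The residual $C_2(t)$ is already zero under the hypotheses of Proposition \ref{B}, since with $L$ Hermitian one has $L\tilde{\rho}_{\theta_t}+\tilde{\rho}_{\theta_t}L=\sum_k 2\lambda_k\tilde{\partial}_k^{(m)}$, which is a linear combination of the tangent basis vectors and is therefore left invariant by $\Pi_{\theta_t}$. For the remaining two, I would start from the explicit forms obtained in Proposition \ref{B} and in the proof of Theorem \ref{C}, namely $\Omega(t)=e^{\Delta(t)}X_0 e^{\Delta(t)}$ and $C_1(t)=\sum_{k=1}^{m}(\eta-1)\lambda_k^2\,e^{\Delta(t)}Y_k e^{\Delta(t)}$, where $\Delta(t)=\tfrac12\sum_i\theta_i(t)A_i$, $X_0=-i[H,\rho_0]$ and $Y_k=\tfrac12(A_k\rho_0+\rho_0 A_k)-A_k\rho_0 A_k$. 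The assumption $[H,\rho_0]=0$ gives $X_0=0$, hence $\Omega(t)\equiv 0$. The assumption $[L,\rho_0]=0$ forces $\rho_0$ to commute with every spectral projector $P_k=A_k$ of $L$; consequently $A_k\rho_0=\rho_0 A_k$ and $A_k\rho_0 A_k=A_k^2\rho_0=A_k\rho_0$, so that $Y_k=A_k\rho_0-A_k\rho_0=0$ for each $k$ and $C_1(t)\equiv 0$. Since $X_0$ and the $Y_k$ are independent of $\theta$, these identities hold at \emph{every} point of $\mathscr{S}$, not merely along a single trajectory.

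Finally, with $\Omega\equiv C_1\equiv C_2\equiv 0$ on $\mathscr{S}$, the three vector fields $-i[H,\tilde{\rho}_\theta]$, $F(\tilde{\rho}_\theta)$ and $\sqrt{\eta}(L\tilde{\rho}_\theta+\tilde{\rho}_\theta L)$ are everywhere tangent to $\mathscr{S}$ and hence fixed by $\Pi_{\theta_t}$; thus the projection filter \eqref{7} coincides with the exact Stratonovich filter \eqref{stra} on $\mathscr{S}$. Therefore $\tilde{\rho}_{\theta_t}$ solves \eqref{stra} with $\tilde{\rho}_{\theta_0}=\rho_0$, and uniqueness yields $\tilde{\rho}_t\equiv\tilde{\rho}_{\theta_t}$. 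The step I expect to be the true crux is the translation of the operator hypotheses into the pointwise vanishing of $Y_k$ and $X_0$ — in particular the observation that commuting with $L$ is equivalent to commuting with each spectral projector $P_k$ — while the concluding uniqueness argument is routine. As an independent cross-check one could instead compare the explicit laws: under $[H,\rho_0]=[L,\rho_0]=0$ the factors $e^{\pm\frac{i}{2}H_\gamma}$ in $\tilde{\rho}_\phi$ cancel and $\rho_\alpha$ becomes block-diagonal, and a direct computation shows that the projection-filter parameter governed by \eqref{14} equals the exact-solution parameter $\theta_k$ of Theorem \ref{main} shifted by the contribution $\alpha_{kk}(t)=(1-\eta)\lambda_k^2 t$, so that the two density operators agree.
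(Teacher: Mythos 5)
Your proof is correct and takes essentially the route the paper intends: the corollary is presented as a direct consequence of Theorem \ref{C}, whose error bound vanishes precisely because $[H,\rho_0]=0$ gives $X_0=0$ and $[L,\rho_0]=0$ (via commutation with the spectral projectors $P_k=A_k$) gives $Y_k=0$ --- the same two computations at the heart of your argument. Your extra care in noting that the residuals then vanish pointwise on $\mathscr{S}$, so that \eqref{7} reduces to \eqref{stra} and uniqueness of solutions of \eqref{zakai} yields $\tilde{\rho}_t\equiv\tilde{\rho}_{\theta_t}$, correctly makes explicit what the paper leaves implicit.
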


\subsection{Quantum state reduction}
Under the quantum non-demolition assumption $[H,L]=0$,  the normalized evolution of the quantum projection filter $\rho_{\theta_{t}}=\frac{\tilde{\rho}_{\theta_{t}}}{{\rm Tr}(\tilde{\rho}_{\theta_{t}})}$ can be written as
\begin{align}
d {\rho}_{\theta_{t}}&= \eta  \left(L\rho_{\theta_{t}}L-\frac{L^{2}\rho_{\theta_{t}}}{2}-\frac{\rho_{\theta_{t}}L^{2}}{2}\right) dt \nonumber\\
&+\sqrt{\eta } \left(L\rho_{\theta_{t}}+\rho_{\theta_{t}}L-2\operatorname{Tr}(L\rho_{\theta_{t}})\rho_{\theta_{t}}\right) d\hat{W}_{t}, \label{Behavior}
\end{align}
where $d\hat{W}_{t}=dY_{t}-2\sqrt{\eta }\operatorname{Tr}(L\rho_{\theta_{t}})dt.$
As in the previous section, let us write $
L=\sum_{i=1}^{n_{0}} \lambda_{i} P_{{i}},
$ where $n_{0} \leq n$ is the number of nonzero distinct eigenvalues
of $L$ denoted by {$\lambda_{i},$}  and
$P_{{i}}$ are orthogonal projections.
The following result states that the quantum state reduction phenomenon occurs for both the evolutions given by~\eqref{1} and by~\eqref{Behavior}; it can be obtained by following standard stochastic LaSalle-type arguments similarly to~\cite{Vanhandel3}, using the Lyapunov function $V(\rho)=\operatorname{Tr}(L^{2}\rho)-\operatorname{Tr}^{2}(L\rho)$.
\begin{theorem}
For every initial condition $\rho_0\in S$, the solution %$\rho_t$ and 
$\rho_{\theta_{t}}$ of %\eqref{1} and~
\eqref{Behavior} %for any $\rho_{0} \in S$, 
converge a.s. as $t \rightarrow \infty$ to one of the subsets $\{\rho\in S\mid P_k\rho=\rho\}$, for $k=1,\dots,n_0$. 
The same property holds true for the solution $\rho_t$ of~\eqref{1}.  %$\rho_{k}=e_{k}e_{k}^{*}$ %with $k \in \{0,\ldots,2J\}.$%
\end{theorem}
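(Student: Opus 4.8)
The plan is to carry out the stochastic LaSalle argument suggested by the authors, using the variance $V(\rho)=\operatorname{Tr}(L^{2}\rho)-\operatorname{Tr}(L\rho)^{2}\ge 0$ as a Lyapunov function, and to treat \eqref{1} and \eqref{Behavior} in parallel, since they differ only in the coefficient of the deterministic dissipator and in the driving noise. First I would compute the It\^o differential of $V$. Writing $m_{1}=\operatorname{Tr}(L\rho_{t})$ and $m_{2}=\operatorname{Tr}(L^{2}\rho_{t})$, the crucial point is that, under the standing hypotheses $L=L^{\dagger}$ and $[H,L]=0$, both the Hamiltonian term and the entire deterministic part contribute zero drift to $m_{1}$ and to $m_{2}$; this is a short computation using cyclicity of the trace together with $LP_{k}=\lambda_{k}P_{k}$. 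Hence the only drift surviving in $dV$ is the It\^o correction $-(dm_{1})^{2}$. Since the measurement backaction yields $dm_{1}=2\sqrt{\eta}\,V\,d\beta_{t}$, with $\beta_{t}=W_{t}$ for \eqref{1} and $\beta_{t}=\hat{W}_{t}$ for \eqref{Behavior}, I obtain
\begin{equation*}
dV(\rho_{t})=-4\eta\,V(\rho_{t})^{2}\,dt+2\sqrt{\eta}\,\mu_{3}(\rho_{t})\,d\beta_{t},
\end{equation*}
where $\mu_{3}(\rho)=\operatorname{Tr}(L^{3}\rho)-3\operatorname{Tr}(L^{2}\rho)\operatorname{Tr}(L\rho)+2\operatorname{Tr}(L\rho)^{3}$ is the third central moment of $L$.

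Next I would exploit this identity. As $S$ is compact and $V$ is continuous, $V(\rho_{t})$ is bounded and nonnegative, and the displayed equation shows that it is a supermartingale, provided $\beta_{t}$ is a genuine Wiener process. For \eqref{1} this holds with $\beta_{t}=W_{t}$ under $\mathbb{P}$, and for \eqref{Behavior} I would read the equation as a bona fide diffusive stochastic master equation for the continuous measurement of $L$ at efficiency $\eta$, driven by the innovation $\hat{W}_{t}$. By the martingale convergence theorem $V(\rho_{t})$ converges a.s.\ to some $V_{\infty}\ge 0$. Integrating the drift and taking expectations gives $4\eta\int_{0}^{\infty}\mathbb{E}[V(\rho_{s})^{2}]\,ds\le\mathbb{E}[V(\rho_{0})]<\infty$; combined with the a.s.\ convergence and the boundedness of $V$, this forces $V_{\infty}=0$, so that $V(\rho_{t})\to 0$ almost surely.

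Finally I would identify the limit set and select a single component. Since $V(\rho)$ is precisely the variance of $L$ in the state $\rho$, one has $V(\rho)=0$ if and only if $\rho$ is supported on a single eigenspace of $L$, i.e.\ $\rho$ belongs to one of the compact, mutually separated, invariant subsets $\mathcal{S}_{k}=\{\rho\in S\mid P_{k}\rho=\rho\}$. Thus $V(\rho_{t})\to 0$ means $\operatorname{dist}(\rho_{t},\bigcup_{k}\mathcal{S}_{k})\to 0$. To single out one subset I would introduce the weights $w_{k}(t)=\operatorname{Tr}(P_{k}\rho_{t})$ and verify, by the same cyclicity computation, that each is a bounded martingale,
\begin{equation*}
dw_{k}(t)=2\sqrt{\eta}\,(\lambda_{k}-m_{1})\,w_{k}(t)\,d\beta_{t},
\end{equation*}
hence converges a.s.\ to some $w_{k}^{\infty}\in[0,1]$. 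Because the $\mathcal{S}_{k}$ are separated while $\rho_{t}$ approaches their union and the weights converge, the trajectory cannot oscillate between distinct components, so the limiting weight vector must be extreme: $w_{k^{*}}^{\infty}=1$ for a random index $k^{*}$ and $w_{k}^{\infty}=0$ otherwise, which yields a.s.\ convergence of $\rho_{t}$ to $\mathcal{S}_{k^{*}}$; running the identical argument for \eqref{1} gives the last assertion. The moment bookkeeping is routine; the genuinely delicate steps are confirming that $\hat{W}_{t}$ drives \eqref{Behavior} as a Wiener process, so that the supermartingale property is available for the projection filter, and the final LaSalle-type exclusion of oscillation between the separated components $\mathcal{S}_{k}$, where the a.s.\ convergence of the martingales $w_{k}$ must be combined with the separation, following \cite{Vanhandel3}.
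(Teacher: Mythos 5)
Your strategy coincides with the paper's: the paper proves this theorem only by pointing to stochastic LaSalle-type arguments in the style of \cite{Vanhandel3} with the Lyapunov function $V(\rho)=\operatorname{Tr}(L^{2}\rho)-\operatorname{Tr}^{2}(L\rho)$, and your computations implementing that pointer are correct. Indeed, with $L=L^{\dagger}$ and $[H,L]=0$ the Hamiltonian and the dissipator (with or without the prefactor $\eta$ appearing in \eqref{Behavior}) contribute zero drift to $m_{1}$ and $m_{2}$, your formulas $dV=-4\eta V^{2}\,dt+2\sqrt{\eta}\,\mu_{3}\,d\beta_{t}$ and $dw_{k}=2\sqrt{\eta}(\lambda_{k}-m_{1})w_{k}\,d\beta_{t}$ check out, and the supermartingale/martingale-convergence bookkeeping is standard. (Two minor remarks: once each $w_{k}$ converges and $V\to 0$, degeneracy of the limiting weight vector is automatic because $V$ is a continuous function of the weights and the $\lambda_{k}$ are distinct, so no separate ``non-oscillation'' argument is needed; and $V=0$ also allows support in $\ker L$, which matches the paper's indexing over $k=1,\dots,n_{0}$ only when $\sum_{k}P_{k}=\mathbb{I}$.)

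The genuine gap is the step you flag and then dispose of by ``reading \eqref{Behavior} as a bona fide stochastic master equation driven by the innovation $\hat W_{t}$'': that reading is circular. The process $\hat W_{t}=Y_{t}-2\sqrt{\eta}\int_{0}^{t}\operatorname{Tr}(L\rho_{\theta_{s}})\,ds$ is the innovation of an \emph{approximate} filter fed with the observations of the true system, and such innovations are in general \emph{not} Wiener processes under any of the measures at hand: under $\mathbb{P}$ one has $d\hat W_{t}=dW_{t}+2\sqrt{\eta}\left(\operatorname{Tr}(L\rho_{t})-\operatorname{Tr}(L\rho_{\theta_{t}})\right)dt$, while under $\mathbb{P}'$ it carries the drift $-2\sqrt{\eta}\operatorname{Tr}(L\rho_{\theta_{t}})\,dt$. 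Without more, the resulting cross term $4\eta\,\mu_{3}(\rho_{\theta_{t}})\left(\operatorname{Tr}(L\rho_{t})-\operatorname{Tr}(L\rho_{\theta_{t}})\right)$ in the $\mathbb{P}$-drift of $V(\rho_{\theta_{t}})$ is not sign-definite and destroys the supermartingale property on which your whole argument rests. What rescues the proof is a QND-specific identity that you must establish: the populations $\hat w_{k}=\operatorname{Tr}(P_{k}\rho_{\theta_{t}})$ and $w_{k}=\operatorname{Tr}(P_{k}\rho_{t})$ satisfy the \emph{same} closed SDE when expressed in terms of the common observation $Y$, namely $d\hat w_{k}=2\sqrt{\eta}(\lambda_{k}-\hat m_{1})\hat w_{k}\left(dY_{t}-2\sqrt{\eta}\,\hat m_{1}\,dt\right)$ with $\hat m_{1}=\sum_{j}\lambda_{j}\hat w_{j}$, with identical initial data $w_{k}(0)=\hat w_{k}(0)=\operatorname{Tr}(P_{k}\rho_{0})$; pathwise uniqueness then yields $w\equiv\hat w$, hence $\operatorname{Tr}(L\rho_{\theta_{t}})\equiv\operatorname{Tr}(L\rho_{t})$ and $\hat W\equiv W$, a genuine $\mathbb{P}$-Wiener process. (Equivalently, compare the explicit solutions: by Theorem \ref{main} and Proposition \ref{B} both sets of populations are proportional to $e^{-2\eta\lambda_{k}^{2}t+2\sqrt{\eta}\lambda_{k}Y_{t}}\operatorname{Tr}(P_{k}\rho_{0})$.) With this identity inserted, your argument closes and indeed treats \eqref{1} and \eqref{Behavior} in parallel as intended.
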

Note that the previous result shows that the solutions of~\eqref{1} and~\eqref{Behavior} share a similar asymptotic behavior, but it does not guarantee that such solutions converge almost surely to the same limit.
The results obtained in~\cite{liang2021robust,liang2022model,benoist2014large} suggest that such limits coincide.
%Assuming that the limits are the same, then 
It is then natural to expect that a feedback control depending on the quantum projection filter may be used to stabilize the system towards a chosen eigenstate of $L$, similarly to what was done in, e.g.,~\cite{Vanhandel3,liang2019exponential}.

\vspace{-3mm}
\section{Numerical simulations}\label{R}
\subsection{A spin-$\frac12$ system}
Here we present simulation results for the simple case of a spin-$\frac 1 2$ system. For a two-level quantum system, $\rho$ can be uniquely characterized by the Bloch sphere coordinates $(x, y, z)$ as $\rho=\frac{1}{2}\begin{psmallmatrix}
1+z & x-iy \\ x+iy & 1-z
\end{psmallmatrix}$.
The vector $(x, y, z)$ belongs to the ball
$B(\mathbb{R}^3):= \{(x, y, z) \in \mathbb{R}^3: x^2 + y^2 + z^2 \leq1\}.$
We take $H=\frac{\omega_{eg}}{2}\sigma_{z}$ and $L=\frac{\sqrt{M}}{2}\sigma_{z}$, where $w_{eg}$ and $M > 0$ are physical parameters. 

It can be verified that the dynamics in the Bloch sphere coordinates are given by 
\begin{equation}
\left\{\begin{array}{l}
d x_t=\left(-\frac{M}{2} x_t-\omega_{eg} y_t\right) d t- \sqrt{\eta M } x_t z_t d W(t) \\
d y_t=\left(\omega_{eg} x_t-\frac{M}{2} y_t\right) d t- \sqrt{\eta M} y_t z_t d W(t) \\
d z_t=\sqrt{\eta M}\left(1-z_t^2\right) d W(t)
\end{array}\right.
\label{bloch}
\end{equation}

%Since $L=\frac{\sqrt{M}}{2}\sigma_{z}$ is self-adjoint, by the spectral theorem, $L$ can be written as
The operator $L$ can be written as $L= \lambda_{1}P_{{1}}+\lambda_{2}P_{{2}},$ where $ \lambda_{1}=\frac{\sqrt{M}}{2}$ and $ \lambda_{2}=-\frac{\sqrt{M}}{2},$ $P_{{1}}=\begin{psmallmatrix}
1 & 0 \\ 0 & 0
\end{psmallmatrix},$ and $P_{{2}}=\begin{psmallmatrix}
0 & 0 \\ 0 & 1
\end{psmallmatrix}.$
We note that $dY(t) = dW(t)+\sqrt{\eta M}{\rm Tr}(\sigma_{z}\rho_{t})$ is used to drive the exponential
quantum projection filter. Here, the matrices
$\sigma_{x}$, $\sigma_{y}$, and $\sigma_{z}$ correspond to the Pauli matrices. We take $t \in [0,T]$ with $T=5$, and step size $\delta=2^{-12}T$. Also, 
$w_{eg}=1$, $\eta = 0.5$, $M=1$, $\alpha=7.61$, $\beta=5$, and $\gamma=10$. The initial state is $\rho_{0}=(-1,0,0)$.
Figure \ref{frob} shows the Frobenius norm of the difference between $\rho_t$ and $\rho_{\theta_t}$. %, while Figure \ref{prob} displays their components along the eigenstate $P_{L{_1}}.$
\begin{figure}
\centerline{\includegraphics[width=2.6in]{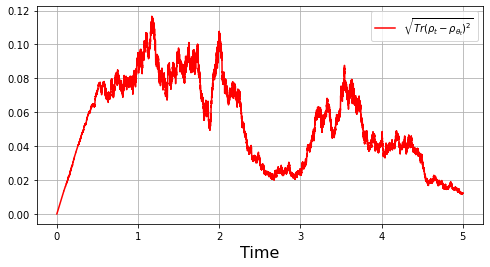}}
\caption{Approximation error between the quantum filter and the quantum projection filter.}
\label{frob}
\end{figure}
%\begin{figure}
%\centerline{\includegraphics[width=2.6in]{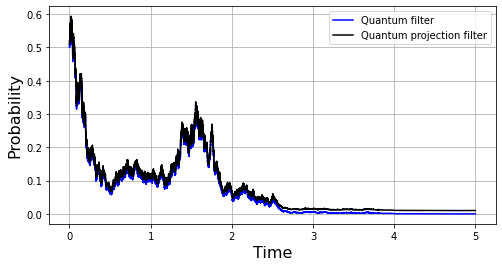}}
%\caption{Components of the quantum filter and the quantum projection filter along $P_{L{_1}}.$}
%\label{prob}
%\end{figure}
%In this section, we psimulation results from a spin-$\frac{1}{2}$
%system example is used to confirm our theoretical results. 
\subsection{Discussion on the error in the presence of a feedback}
Our goal is to study whether the  approach developed in the previous sections remains effective in the presence of a controlled Hamiltonian. In particular, we wonder whether the quantum projection filter  is a good candidate to replace the original filter  in  the   stabilizing feedback law introduced in \cite{Liang}. In that paper, the dynamics of a controlled spin-$\frac 12$ generalizing the dynamics \eqref{bloch} in presence of a control law $u_t$ takes  the following form 
$$\left\{\begin{array}{l}
d x_t=\left(-\frac{M}{2} x_t-\omega_{eg} y_t+u_tz_t\right) d t- \sqrt{\eta M } x_t z_t d W(t) \\
d y_t=\left(\omega_{eg} x_t-\frac{M}{2} y_t\right) d t- \sqrt{\eta M} y_t z_t d W(t) \\
d z_t=-u_t x_t d t+ \sqrt{\eta M}\left(1-z_t^2\right) d W(t)
\end{array}\right.$$
In \cite{Liang}, 
a feedback controller $u_t=u(\rho_t)$ is applied to stabilize the above system towards the excited state $\rho_e$ corresponding to the Bloch sphere coordinates $(0,0,-1).$  The feedback takes the form 

\vspace{-3pt}

\begin{equation}u(\rho)=\alpha[V(\rho)]^{\beta}- \gamma {\rm Tr}(i[\sigma_{y},\rho]\rho_{e}),
\label{feedback}
\end{equation}

\vspace{-3pt}

\noindent where $V (\rho) \!=\!\sqrt{1-{\rm Tr}(\rho\rho_{e})}$, with $\alpha>0,$ $\beta \geq 0,$ and $\gamma \geq1$.

Here we assume that the feedback law \eqref{feedback} is evaluated at $\rho_\theta$ instead of $\rho$ and we study numerically the stabilization towards the excited state. The simulation parameters are the same as before. 
% with $\alpha=7.61$, $\beta=5$, and $\gamma=10$.%
The validity of the proposed approximation filtering
scheme is checked through the Frobenius norm of the difference between $\rho_{t}$ and $\rho_{\theta_{t}}$ in Figure \ref{frobf}. %One can see that $\rho_{t}$ and $\rho_{\theta_{t}}$ are very close
%over this time interval. 
Figure \ref{frobft} shows the convergence towards the target state. %One can observe that
%$\rho_{t}$ converges to the target state.
\begin{figure}
\centerline{\includegraphics[width=2.6in]{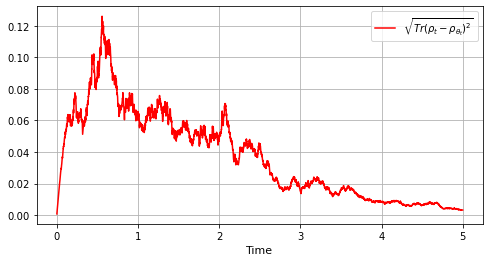}}
\caption{Approximation error between the quantum filter and the quantum projection filter in presence of a feedback control based on the projection filter.}
\label{frobf}
\end{figure}
%\vspace{-2.73mm}
\begin{figure}
\centerline{\includegraphics[width=2.6in]{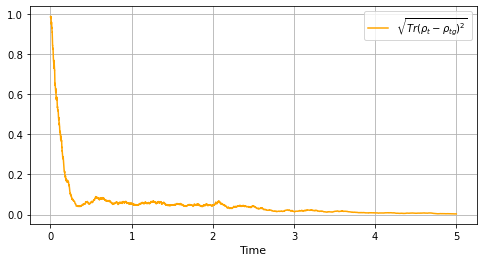}}
\caption{Convergence of the quantum filter to the target state $\rho_{e}$ by applying a feedback control based on the projection filter.
}
\label{frobft}
\end{figure}

\section{Conclusions and Future works}\label{S}
In this paper we first develop an approach allowing us to derive the exact solution of the filter equation under QND measurement with imperfect measurements. Such a solution is described in terms of a solution of a simplified stochastic differential equation. To further reduce the complexity of the dynamics, we generalize the projection filter approach developed in~\cite{Gao} to the case of imperfect measurements. An analysis of the approximation error has been performed and a quantum state reduction result for the projected dynamics has been shown in the case of QND measurement. Simulations of a two-level  system are provided with the aim of verifying the efficiency of the projection filtering method in the feedback stabilization design.
In future work, we aim at improving the error estimate, for instance by making use of the approach established in~\cite{gao2020improved}, where a projection filter design for the case of perfect measurements was provided based on Stratonovich stochastic Taylor expansions.
%OR
%{\color{green} In future work, we aim at improving the error estimate, by providing a new quantum projection filter design based on Stratonovich stochastic Taylor expansions as proposed in ~\cite{gao2020improved} for the case of imperfect measurements.}
Further research lines include  providing a rigorous analytic study for the stabilization property observed numerically and extending our results to the case $[H,L] \neq 0$.
%In this paper, we study a quantum projection filtering approach for an open quantum system with measurement imperfections. {\textcolor{blue}{Exact solution of the filter equation is expressed in terms of simplified stochastic differential equations by the use of a refined projection submanifold. Local approximation errors of the projection process are obtained by imposing some technical assumptions. A bound for the average total residual norm is derived. Almost sure convergence of the projected filter to one of the set of equilibrium states is proved.}} Simulations of a two-level  system are provided with the aim of verifying the efficiency of the projection filtering method in the feedback stabilization design.
%In future work, we aim at providing a rigorous analytic study for the stabilization property observed numerically. An extension of our results to the case {$[H,L] \neq 0$} is also considered in our research lines.
\section*{Appendix}
The following lemma collects some standard properties of singular values.

{\bf{Lemma A.1}.} Let $A$ and $B$ be $n \times n$ matrices. Then,
\begin{itemize}
    \item {$\sum_{i=1}^{m} s_{i}(AB) \leq \sum_{i=1}^{m} s_{i}(A) s_{i}(B), 1 \leq m \leq n$;} %\\
\item{$s_{1}(AB) \leq  s_{1}(A) s_{1}(B)$;} %\\
\item{$s_{i}(AA^{\dag})=s_{i}^{2}(A);$} %\\
\item{$\sum_{i}s_{i}(AA^{\dag})={\rm Tr}(AA^{\dag}).$}
\end{itemize}
\label{lem:A1}
\section*{Acknowledgment}
This work is supported by the Agence Nationale de la Recherche projects Q-COAST ANR- 19-CE48-0003 and IGNITION ANR-21-CE47- 0015I. The authors would like to thank Sofiane Chalal and Mario Neufcourt for the helpful discussions. 
\bibliographystyle{IEEEtran}
\bibliography{biblioibrahim}

\end{document}